\documentclass[12pt]{article}
\usepackage{amsmath, amssymb, amsfonts, amsthm}
\usepackage{mathtools,mathrsfs}
\usepackage{fullpage, setspace}
\usepackage{bm, comment}
\usepackage{booktabs}
\usepackage{natbib}
\usepackage{color}
\usepackage{graphicx} 
\usepackage{changepage}
\usepackage{ulem}
\usepackage{authblk}

\DeclareMathOperator*{\argmax}{arg\,max}

\newtheorem{definition}{Definition}
\newtheorem{proposition}{Proposition}

\definecolor{teal}{RGB}{0,128,128}

\definecolor{purple}{RGB}{128,0,128}

\title{Uncertainty Quantification in Bayesian Clustering}
\date{}

\author[1]{Garritt L. Page}
\author[2]{Andr\'es F. Barrientos}
\author[1]{David B. Dahl} 
\author[3]{\ \\ David B. Dunson}

\affil[1]{Department of Statistics, Brigham Young University, USA }
\affil[2]{Department of Statistics, Florida State University, USA}
\affil[3]{Department of Statistical Science, Duke University, USA}

\begin{document}

\maketitle

\doublespacing

\begin{abstract}
Bayesian clustering methods have the widely touted advantage of providing a probabilistic characterization of uncertainty in clustering through the posterior distribution. An amazing variety of priors and likelihoods have been proposed for clustering in a broad array of settings. There is also a rich literature on Markov chain Monte Carlo (MCMC) algorithms for sampling from posterior clustering distributions. However, there is relatively little work on summarizing the posterior uncertainty. The complexity of the partition space corresponding to different clusterings makes this problem challenging. We propose a post-processing procedure for any Bayesian clustering model with posterior samples that generates a credible set that is easy to use, fast to compute, and intuitive to interpret. We also provide new measures of clustering uncertainty and show how to compute cluster-specific parameter estimates and credible regions that accumulate a desired posterior probability without having to condition on a partition estimate or employ label-switching techniques. We illustrate our procedure through several applications.
\end{abstract}
Keywords: Bayesian nonparametrics; Credible interval; Mixture model; MCMC samples; Model-based clustering; Random partition.

\section{Introduction}


Quantifying uncertainty associated with objects that are estimated from data is a key contribution that the field of statistics makes within the scientific community.  Uncertainty quantification becomes more challenging as the objects being estimated increase in complexity.  This is unquestionably true in clustering when the goal of the analysis is to estimate a partition or grouping of the $n$ subjects in a study.   To make ideas concrete, let $\rho = \{C_1, \ldots, C_k\}$ denote a random partition of $n$ units into $k$ clusters (i.e., groups) such that: (i) $\cup_{j=1}^k C_j = \{1, \ldots, n\}$, (ii) $C_j \not= \emptyset$, and (iii) $C_j \cap C_{j'} = \emptyset$ for $j \ne j'$. Let $\mathcal{P}$ be the space where $\rho$ takes values, that is, the set of all possible partitions of $n$ items. This space is discrete and finite but suffers from combinatorial explosion resulting in its size growing exponentially as $n$ increases.   Additionally, the relationships among partitions are nonlinear and there is not a natural ordering of partitions, so the meaning of relationships between different partitions is not straightforward and there is little intuition regarding distances among partitions. As a result, estimating the ``central partition'' is not obvious, and assessing uncertainty is even more challenging.

From a Bayesian perspective, the uncertainty associated with a parameter is provided by the posterior distribution.  The more diffuse the posterior, the more uncertainty that accompanies it.   For parameters in $\mathbb{R}^n$, it is common to summarize a parameter's uncertainty by constructing an interval or region from its posterior distribution containing pre-specified posterior mass.  The width of the interval (or volume of the region) gives an indication of the uncertainty, with wider intervals (or larger volumes) indicating more uncertainty.  Unfortunately, such an approach is not straightforward for partitions as they do not belong to $\mathbb{R}^n$.  For a highly concentrated posterior partition distribution, it may be meaningful to assess uncertainty by reporting the probability associated with a partition estimate, but this becomes meaningless as the posterior distribution becomes more diffuse.  

\cite{wade&ghahramani:2018} defined credible balls over $\mathcal{P}$ based on
\begin{align}\label{eq:wade_ball}
\mathcal{B}_{\epsilon}(r_0) = \left\{ r \in \mathcal{P}  \, : \, d(r, r_0) \le \epsilon \right\}, 
\end{align}
where $r_0$ is a fixed partition, $d(\cdot, \cdot)$ is a metric (such as Binder's loss or variation of information) and $\epsilon$ is selected so that $\Pr(\rho \in \mathcal{B}_{\epsilon}(r_0) \mid \bm{y}) \approx \gamma$ for some pre-specified $\gamma \in [0,1]$.  Although this approach does define a collection of partitions that have the required posterior mass, the interpretation of $\mathcal{B}_{\epsilon}(r_0)$ is opaque, since it is not clear how $r \in \mathcal{B}_{\epsilon}(r_0)$ are related other than not being greater than $2\epsilon$ from each other in the chosen loss, which has an unclear meaning in practice. To more succinctly summarize $\mathcal{B}_{\epsilon}(r_0)$, \cite{wade&ghahramani:2018} suggest reporting the vertical upper bound (a partition in $\mathcal{B}_{\epsilon}(r_0)$ with the most clusters that is farthest from $r_0$), the lower vertical bound (a partition in $\mathcal{B}_{\epsilon}(r_0)$ with the fewest clusters that is farthest from $r_0$), and the horizontal bound (a partition in $\mathcal{B}_{\epsilon}(r_0)$ that is farthest from $r_0$).  Of course, for a given summary, multiple partitions could satisfy the definition.  Furthermore, due to the complexity of $\mathcal{P}$, the three boundary partitions are often not very enlightening.  This is even more acutely true when the available software for finding the boundary partitions only considers partitions visited in an MCMC sample.

Rather than relying on an $\epsilon$ loss threshold, we construct a subset of $\mathcal{P}$ whose elements all share a specific criterion: all partitions in our subset have a common subset of individuals that are clustered in the same way.  This is done by formulating a sequence of subsets of $\mathcal{P}$, starting with $\mathcal{P}$ and ending with a single partition $\{r_0\}$. This framework enhances interpretability as each element of the sequence of subsets is composed of a  subpartition for a subset of the $n$ units. The general idea that motivated considering these subpartitions is tangentially related to the framework described in \cite{barrientos_etal:2022} who developed a method of quantifying uncertainty in ranking or ordering-related statements.   Like partitions, a ranking of $n$ units lives in a complex space and is typically accompanied by high, yet difficult-to-assess uncertainty.

The constructed sequence of subpartitions results in innovations to inference in partition modeling.  First, we produce a credible set (that is, a subset of $\mathcal{P}$) that is very interpretable, since all partitions in the credible set share a common subpartition. As a result, we can say something concretely about all the partitions in the credible set.  Second, the credible set is easily computed using new functionality in established software ({\tt salso} {\tt R} package \citealt{dahl_etal:2021}). Third, the credible set is intuitive in that more uncertainty is reflected as the posterior distribution becomes more diffuse.  Fourth, we are able to provide an overall measure of posterior uncertainty comparable to Euclidean concepts such as standard deviation. Fifth, we provide an interpretable unit-level measure of uncertainty for observations outside the subpartition based on posterior probabilities. Finally, an advantage of our approach is that it provides cluster-specific parameter estimates and credible regions that accumulate a desired posterior probability without having to condition on a partition estimate or employ relabeling techniques.  All of this is available as a post-processing procedure that is generally applicable and not tied to any particular clustering model.  

A disclaimer is that our approach for summarizing the posterior distribution of $\rho$ is based on posterior samples. We implicitly assume that these samples provide a good representation of the true posterior, which in turn provides an accurate characterization of our uncertainty in the inferred clustering structure in the data. If a poor model is chosen, the MCMC sampler does not converge, or the MCMC does not mix adequately, then all approaches for summarizing the posterior distribution from these samples, including ours, will produce misleading results.

 \cite{buch2024bayesianlevelsetclustering}  estimate partitions based on a density estimate, resulting in a subpartition (only units with high density are included in the clustering) that is used to form $\mathcal{B}_{\epsilon}(r_0)$.  Unfortunately, this ball is difficult to interpret since not all partitions in $\mathcal{B}_{\epsilon}(r_0)$ contain the subpartition.  To better summarize a credible set, \cite{balocchi2025understandinguncertaintybayesiancluster} report a collection of partition point estimates that are analogous to mode centers. This is a useful summary of the posterior distribution, but does not quantify its spread or variability.  
\cite{lavigne&liverani:2024} compute the predictive probability that each point belongs to a cluster based on a point estimate. \cite{zhu&melnykov:2015} provide an approach from a frequentist perspective, treating posterior probabilities of cluster membership as functions of mixture model parameters and using the delta method to derive their standard errors.  Finally, the posterior pairwise co-clustering matrix has been used as a tool to assess uncertainty, but concrete information gleaned from this tool is limited. 

The remainder of the paper is organized as follows.  In Section \ref{sec:definition}, we build intuition through a simulated example and then rigorously define our interpretable credible set.  Section \ref{sec:computation} contains details on how the interpretable credible set is computed.  Section \ref{sec:summarizing_uncertainty} shows how the interpretable credible set can be used to assess and quantify partition uncertainty.  Section \ref{sec:illustrationssimulations} contains a simulation study  and two data applications that illustrate our approach.  We provide some concluding remarks in Section \ref{sec:conclusion}.



\section{Defining an Interpretable Credible Set} \label{sec:definition}

\subsection{Synthetic Example to Build Intuition}\label{sec:intuition_building}
We first introduce a simple example to illustrate our proposed method and the inferences that can be made regarding posterior partition uncertainty.  In subsequent sections, we describe our method using precise notation and exact details, but it is helpful to build intuition to guide understanding as heavy notation is encountered in subsequent sections. This example will be discussed more extensively in Section \ref{sec:toy_example}.

\begin{figure}
\begin{center}
\includegraphics[scale=0.8]{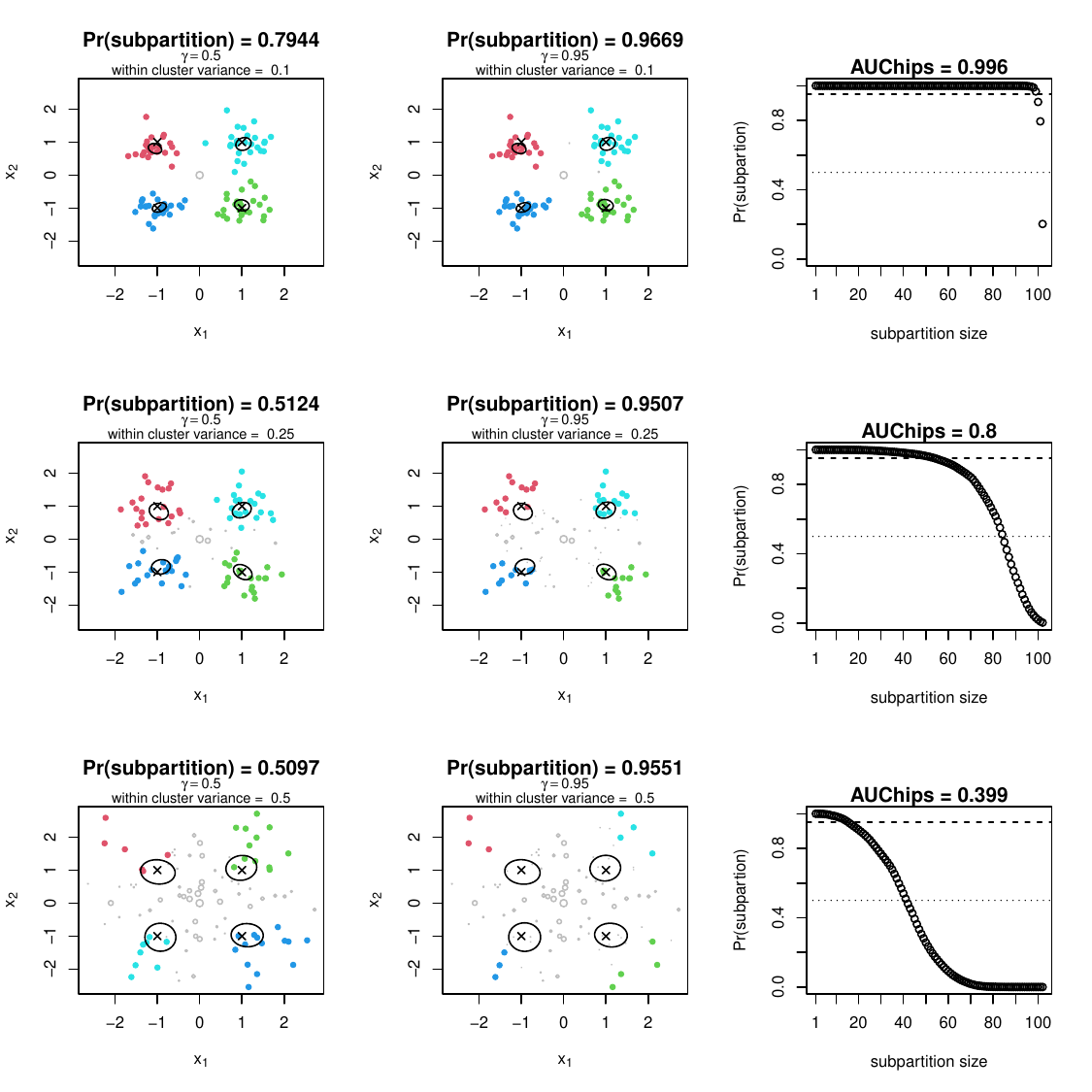}
\end{center}
\caption{Results from a single synthetic dataset generated from a bivariate Gaussian mixture.  Each row corresponds to a different value of between-cluster spread.}
\label{fig:toyexample}
\end{figure}

Consider the simulated datasets displayed in Figure \ref{fig:toyexample} that consist of four clusters each with 25 observations (100 observations in total) generated from a bivariate Gaussian distribution whose mean is one of $\{(-1,-1), (-1,1), (1,-1), (1,1)\}$ and covariance is $\sigma^2 \bm{I}$. The overlap between clusters (i.e., variability within each cluster) is progressively increased by considering the following within cluster variances $\sigma^2 \in \{0.1, 0.25, 0.5\}$. One additional point at $(0,0)$ is added for discussion purposes.

Based on these observations, we sampled cluster labels from the posterior clustering distribution using the MCMC approach outlined in Section \ref{sec:illustrationssimulations}.  Figure \ref{fig:toyexample} illustrates the contributions our approach provides.  The first two columns display the 100 observations and subpartitions (depicted by the colored points) that correspond to two posterior probability thresholds which are labeled by $\gamma=0.5$ and $\gamma= 0.95$. As mentioned in the Introduction,  we use the term subpartition to describe the partition of a subset of individuals. The threshold $\gamma$ is an input parameter that resembles the idea of a credible level. Due to the discreteness of the space $\mathcal{P}$ and the finite number of MCMC samples, it is not guaranteed that we will find a subpartition with exact posterior probability $\gamma$. Instead, we aim to identify a subpartition that accumulates at least $\gamma$ posterior probability while including as many individuals as possible. Once identified, this subpartition is expected in at least $\gamma \times 100\%$ of the posterior draws of $\rho$. The actual posterior probability of the reported subpartition in Figure~\ref{fig:toyexample} is shown in each subfigure above the employed threshold $\gamma$ values.

The credible sets we construct include partitions that contain the respective subpartition. For example, the subpartition in the top plot of the first column contains all points except the middle point at $(0,0)$.  Thus, the subset of $\mathcal{P}$ that we construct contains five partitions, one where the point $(0,0)$ is a singleton cluster, one for which the point $(0,0)$ is assigned to the red cluster, one where the point $(0,0)$ is assigned to the blue cluster, etc.  When an observation such as $(0,0)$ is difficult to assign to a cluster, while maintaining a certain posterior probability, it is preferable not to make a specific statement about how it is clustered, but rather quantify the uncertainty introduced when trying to assign it to a cluster. In the figure, as the posterior probability threshold increases (i.e., moving from the left column to the middle column), the subpartition contains fewer observations.  Additionally, as the cluster overlap diminishes (moving from the top row to the bottom row), the subpartition contains fewer observations.  The gray points do not belong to the subpartitions and their circumference corresponds to point-level uncertainty.  Points whose circumference is large have a high uncertainty in being allocated to a cluster in the subpartition.

The ``$\times$'' symbols in each subfigure of the first two columns of Figure \ref{fig:toyexample} correspond to the true cluster centers used to generate data.  The ellipses correspond to credible regions computed using our approach to carry out cluster-specific parameter inference based on the subpartition.  All ellipses contain the truth (i.e., the ``$\times$'' symbols) except for the dataset in the top row. Notice further that, as the subpartition size decreases, the volume of the ellipses increases.  The credible regions depicted by the ellipses are computed using the joint posterior distribution of the partition $\rho$ and cluster-specific parameters. This is done by using the cluster-specific parameters drawn from the posterior distribution, restricted to those draws of $\rho$ in which the given subpartition is present.  The result is a unified approach that avoids the two-step procedure of estimating the partition first and then estimating cluster-specific parameters for a fixed (but uncertain) partition estimate.

The right column of Figure \ref{fig:toyexample} illustrates our global metric of uncertainty. The $x$-axis represents the number of individuals in the subpartition, and the $y$-axis represents the largest posterior probability across all partitions of that size. As expected, the posterior probability decreases as the number of individuals in the subpartition increases. Notice that less cluster overlap (which implies lower uncertainty in estimating the partition) results in long-lasting high posterior probability and more ``area under the curve.'' We interpret this area under the curve as an overall uncertainty metric and refer to it as AUChips. Values of AUChips that are close to one correspond to a very concentrated posterior distribution of the partition $\rho$, while values that approach 0 correspond to a posterior distribution that is more diffuse.

With this brief demonstration of the insights that our approach can provide, we now turn to the technical details.

\subsection{Notation and Preliminaries}\label{sec:notation}

Below we introduce the notation used to precisely describe our methodology.  Recall that we use $\rho$ to denote a random partition of $\{1,\ldots,n\}$ taking values in $\mathcal{P}$.  We will use $r = \{C_1, \ldots, C_k\} \in \mathcal{P}$ to denote a realization of $\rho$. For a given subset $A \subseteq \{1, \ldots, n\}$, we define $r_{A} = \{C_1 \cap A, \ldots, C_k \cap A\}$ and refer to it as the subpartition of $r$ associated with the subset $A$.  Similarly,  $\rho_A$ will denote the subpartition of $\rho$ associated with the subset $A$.

\subsection{Constructing an Interpretable Subset of $\mathcal{P}$}\label{sec:openset_description}
Let $\boldsymbol{\pi} = (\pi_1, \ldots, \pi_n)$ be a permutation of the vector $(1, \ldots, n)$ and $\boldsymbol{\pi}_{1:\ell}  = \{ \pi_1, \ldots, \pi_\ell\}\subseteq \{1, \ldots, n\}$. While $\boldsymbol{\pi}$ is a vector,  $\boldsymbol{\pi}_{1:\ell} = \{ \pi_1, \ldots, \pi_\ell\}$ is a set, which means that $\boldsymbol{\pi} \neq \boldsymbol{\pi}_{1:n} = \{1, \ldots, n\}$. The set of all possible partitions of $n$ items is $\mathcal{P}$
and we introduce a sequence of subsets of $\mathcal{P}$ that will play a key role in defining neighborhoods of partitions based on shared subpartitions.
\begin{definition}\label{def_open_set_seq}
Given a partition $r_0 \in \mathcal{P}$ and a permutation $\boldsymbol{\pi}$ of $\{1, \ldots, n\}$, define $\mathcal{O}(r_0, \boldsymbol{\pi}_{1:\ell})$ as a sequence of subsets of $\mathcal{P}$ as
\[
\mathcal{O}(r_0, \boldsymbol{\pi}_{1:\ell}) = \left\{ r \in \mathcal{P} : r_{0,\boldsymbol{\pi}_{1:\ell}} \text{ is a subpartition of } r \right\}, \quad \ell = 1, \ldots, n,
\]
where $r_{0,\boldsymbol{\pi}_{1:\ell}}$ denotes the subpartition of $r_0$ associated with the subset $\boldsymbol{\pi}_{1:\ell}$ and $\ell$ denotes the number of units in $r_{0,\boldsymbol{\pi}_{1:\ell}}$. 
\end{definition}
\noindent Each set $\mathcal{O}(r_0, \boldsymbol{\pi}_{1:\ell})$ contains all partitions that share a common subpartition with $r_0$. The sequence $\{\mathcal{O}(r_0, \boldsymbol{\pi}_{1:\ell})\}_{\ell=1}^n$ exhibits a monotone structure that facilitates constructing an interpretable credible set for $\rho$.  This is formalized in the following proposition.
\begin{proposition}[Monotonicity of $\mathcal{O}(r_0,\boldsymbol{\pi}_{1:\ell})$]\label{monotonicity_open_set}
For $\ell=1,\ldots,n$, the sequence $\mathcal{O}(r_0,\boldsymbol{\pi}_{1:\ell})$ is nested and decreasing:
$\mathcal{O}(r_0,\boldsymbol{\pi}_{1:\ell+1}) \subseteq \mathcal{O}(r_0,\boldsymbol{\pi}_{1:\ell})$ for all $1\le\ell<n$. 
Moreover, $\mathcal{O}(r_0,\boldsymbol{\pi}_{1:1})=\mathcal{P}$ and $\mathcal{O}(r_0,\boldsymbol{\pi}_{1:n})=\{r_0\}$.
\end{proposition}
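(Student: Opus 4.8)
The proof is a direct unwinding of Definition~\ref{def_open_set_seq}, so the plan is to first pin down what membership in $\mathcal{O}(r_0,\boldsymbol{\pi}_{1:\ell})$ means and then read off all three claims. I would begin by making the phrase ``$r_{0,\boldsymbol{\pi}_{1:\ell}}$ is a subpartition of $r$'' precise: it says exactly that $r$ and $r_0$ induce the same partition on the set $\boldsymbol{\pi}_{1:\ell}$, i.e.\ $r_{\boldsymbol{\pi}_{1:\ell}} = r_{0,\boldsymbol{\pi}_{1:\ell}}$ (with empty intersections discarded so that $r_A$ is a genuine partition of $A$); equivalently, for every pair $i,j\in\boldsymbol{\pi}_{1:\ell}$, the units $i$ and $j$ share a block of $r$ if and only if they share a block of $r_0$. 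Hence $\mathcal{O}(r_0,\boldsymbol{\pi}_{1:\ell}) = \{\, r\in\mathcal{P} : r_{\boldsymbol{\pi}_{1:\ell}} = r_{0,\boldsymbol{\pi}_{1:\ell}} \,\}$.

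The single structural fact I need is that restriction is transitive: for any $r\in\mathcal{P}$ and nested subsets $B\subseteq A\subseteq\{1,\dots,n\}$ one has $(r_A)_B = r_B$, which follows blockwise from $(C\cap A)\cap B = C\cap B$. Nestedness of the sequence is then immediate. Since $\boldsymbol{\pi}$ is a fixed permutation, its prefix sets satisfy $\boldsymbol{\pi}_{1:\ell}\subseteq\boldsymbol{\pi}_{1:\ell+1}$; so if $r\in\mathcal{O}(r_0,\boldsymbol{\pi}_{1:\ell+1})$, meaning $r_{\boldsymbol{\pi}_{1:\ell+1}} = r_{0,\boldsymbol{\pi}_{1:\ell+1}}$, then restricting both sides to $\boldsymbol{\pi}_{1:\ell}$ and applying transitivity of restriction gives $r_{\boldsymbol{\pi}_{1:\ell}} = r_{0,\boldsymbol{\pi}_{1:\ell}}$, i.e.\ $r\in\mathcal{O}(r_0,\boldsymbol{\pi}_{1:\ell})$. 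This yields $\mathcal{O}(r_0,\boldsymbol{\pi}_{1:\ell+1})\subseteq\mathcal{O}(r_0,\boldsymbol{\pi}_{1:\ell})$ for all $1\le\ell<n$.

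For the endpoints I would use the two degenerate cases of restriction. When $\ell=1$, $\boldsymbol{\pi}_{1:1}=\{\pi_1\}$ is a singleton, so $r_{\{\pi_1\}} = \{\{\pi_1\}\} = r_{0,\{\pi_1\}}$ for every $r\in\mathcal{P}$; hence $\mathcal{O}(r_0,\boldsymbol{\pi}_{1:1})=\mathcal{P}$. When $\ell=n$, $\boldsymbol{\pi}_{1:n}=\{1,\dots,n\}$ because $\boldsymbol{\pi}$ is a permutation, and restriction to the full index set is the identity, $r_{\{1,\dots,n\}}=r$; so $r\in\mathcal{O}(r_0,\boldsymbol{\pi}_{1:n})$ holds iff $r=r_0$, giving $\mathcal{O}(r_0,\boldsymbol{\pi}_{1:n})=\{r_0\}$.

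There is no genuine obstacle here: the only care needed is in the bookkeeping of the first step --- making ``subpartition of'' unambiguous and consistently discarding empty blocks --- after which nestedness and the two endpoint identities are each a one-line set manipulation. The substantive point, such as it is, is simply that the trivial monotonicity of the prefix sets $\boldsymbol{\pi}_{1:\ell}$ is transported, via transitivity of restriction, into the monotonicity of the sets $\mathcal{O}(r_0,\boldsymbol{\pi}_{1:\ell})$.
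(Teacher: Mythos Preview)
Your proof is correct and follows essentially the same approach as the paper's: both interpret membership in $\mathcal{O}(r_0,\boldsymbol{\pi}_{1:\ell})$ as the equality $r_{\boldsymbol{\pi}_{1:\ell}} = r_{0,\boldsymbol{\pi}_{1:\ell}}$, use the inclusion $\boldsymbol{\pi}_{1:\ell}\subseteq\boldsymbol{\pi}_{1:\ell+1}$ together with transitivity of restriction to get nestedness, and handle the two endpoints by the degenerate singleton and full-set restrictions. Your version is slightly more explicit about the bookkeeping (discarding empty blocks, stating $(r_A)_B=r_B$), but the argument is the same.
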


\begin{proof}
If $r \in \mathcal{O}(r_0,\boldsymbol{\pi}_{1:\ell+1})$, then by definition 
$r_{\boldsymbol{\pi}_{1:\ell+1}} = r_{0,\boldsymbol{\pi}_{1:\ell+1}}$.  
Since $\boldsymbol{\pi}_{1:\ell} \subset \boldsymbol{\pi}_{1:\ell+1}$ for $1 \le \ell < n$, 
restricting both sides to $\boldsymbol{\pi}_{1:\ell}$ gives 
$(r_{\boldsymbol{\pi}_{1:\ell+1}})|_{\boldsymbol{\pi}_{1:\ell}} = r_{\boldsymbol{\pi}_{1:\ell}} = r_{0,\boldsymbol{\pi}_{1:\ell}}= (r_{0,\boldsymbol{\pi}_{1:\ell+1}})|_{\boldsymbol{\pi}_{1:\ell}}$, 
where $r|_{S}$ denotes the restriction of $r$ to the subset $S$.  
Hence $r \in \mathcal{O}(r_0,\boldsymbol{\pi}_{1:\ell})$, proving that
$
\mathcal{O}(r_0,\boldsymbol{\pi}_{1:\ell+1}) \subseteq \mathcal{O}(r_0,\boldsymbol{\pi}_{1:\ell}).
$
For the endpoints, when $\boldsymbol{\pi}_{1:1} = \{\pi_1\}$, any partition restricted to a singleton coincides with $r_0$ on that element, so $\mathcal{O}(r_0,\boldsymbol{\pi}_{1:1}) = \mathcal{P}$.  
When $\boldsymbol{\pi}_{1:n} = \{1,\ldots,n\}$, the  condition $r_{\boldsymbol{\pi}_{1:n}} = r_{0,\boldsymbol{\pi}_{1:n}}$ implies $r = r_0$, and thus $\mathcal{O}(r_0,\boldsymbol{\pi}_{1:n}) = \{r_0\}$.
\end{proof}

The nested and monotonic structure of $\{\mathcal{O}(r_0,\boldsymbol{\pi}_{1:\ell})\}_{\ell=1}^n$  induces a monotonic behavior in the corresponding posterior probabilities, as established in the following proposition.
\begin{proposition}[Monotonicity of posterior probabilities over $\mathcal{O}(r_0,\boldsymbol{\pi}_{1:\ell})$]\label{monotonicity_prob}
For any posterior on $\rho$,
\[
\Pr\bigl(\rho\in\mathcal{O}(r_0,\boldsymbol{\pi}_{1:\ell})\mid \bm{y}\bigr)
\text{ is nonincreasing in }\ell,
\]
with 
$\Pr(\rho\in\mathcal{O}(r_0,\boldsymbol{\pi}_{1:1})\mid\bm{y})=1$ 
and 
$\Pr(\rho\in\mathcal{O}(r_0,\boldsymbol{\pi}_{1:n})\mid\bm{y})=\Pr(\rho=r_0\mid\bm{y})$.
\end{proposition}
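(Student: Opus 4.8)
The plan is to obtain this proposition as a direct corollary of Proposition~\ref{monotonicity_open_set} together with the elementary fact that any probability measure is monotone with respect to set inclusion. Since $\mathcal{P}$ is finite, there are no measurability concerns: the posterior $\Pr(\cdot \mid \bm{y})$ is simply a probability mass function on $\mathcal{P}$, and the induced law of $\rho$ assigns to each subset of $\mathcal{P}$ the sum of the posterior masses of its elements.

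First I would invoke the nesting established in Proposition~\ref{monotonicity_open_set}: for every $1 \le \ell < n$ we have $\mathcal{O}(r_0,\boldsymbol{\pi}_{1:\ell+1}) \subseteq \mathcal{O}(r_0,\boldsymbol{\pi}_{1:\ell})$. Applying monotonicity of the posterior measure to this inclusion immediately yields $\Pr\bigl(\rho \in \mathcal{O}(r_0,\boldsymbol{\pi}_{1:\ell+1}) \mid \bm{y}\bigr) \le \Pr\bigl(\rho \in \mathcal{O}(r_0,\boldsymbol{\pi}_{1:\ell}) \mid \bm{y}\bigr)$, which is exactly the assertion that $\Pr\bigl(\rho \in \mathcal{O}(r_0,\boldsymbol{\pi}_{1:\ell}) \mid \bm{y}\bigr)$ is nonincreasing in $\ell$. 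Then I would handle the two endpoints using the second half of Proposition~\ref{monotonicity_open_set}: because $\mathcal{O}(r_0,\boldsymbol{\pi}_{1:1}) = \mathcal{P}$ and the posterior is a probability distribution on $\mathcal{P}$, $\Pr\bigl(\rho \in \mathcal{O}(r_0,\boldsymbol{\pi}_{1:1}) \mid \bm{y}\bigr) = \Pr(\rho \in \mathcal{P} \mid \bm{y}) = 1$; and because $\mathcal{O}(r_0,\boldsymbol{\pi}_{1:n}) = \{r_0\}$, $\Pr\bigl(\rho \in \mathcal{O}(r_0,\boldsymbol{\pi}_{1:n}) \mid \bm{y}\bigr) = \Pr(\rho = r_0 \mid \bm{y})$.

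As for obstacles, there really is no hard step here: all the structural work was done in Proposition~\ref{monotonicity_open_set}, and what remains is a one-line consequence of the fact that probability respects set inclusion. The only point worth stating carefully is that the argument is made at the level of the induced law of $\rho$ under the posterior rather than pointwise on $\mathcal{P}$, but with a finite partition space this is immediate. I would also add a short remark that the displayed chain shows the sequence of posterior probabilities interpolates monotonically between $1$ and $\Pr(\rho = r_0 \mid \bm{y})$, which is precisely the property that makes it sensible, for a target level $\gamma$, to take the largest $\ell$ for which the accumulated posterior probability is still at least $\gamma$.
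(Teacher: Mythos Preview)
Your proposal is correct and follows essentially the same approach as the paper: invoke the set nesting from Proposition~\ref{monotonicity_open_set}, apply monotonicity of probability under inclusion, and read off the endpoint values from $\mathcal{O}(r_0,\boldsymbol{\pi}_{1:1})=\mathcal{P}$ and $\mathcal{O}(r_0,\boldsymbol{\pi}_{1:n})=\{r_0\}$. Your added remarks on measurability and on interpolation toward choosing $\ell$ are fine but not needed for the proof itself.
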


\begin{proof}
The result follows from the nested nature of the sequence $\{\mathcal{O}(r_0,\boldsymbol{\pi}_{1:\ell})\}_{\ell=1}^n$ established in Proposition~\ref{monotonicity_open_set}. If $A \subseteq B$, then $\Pr(A\mid\bm{y}) \le \Pr(B\mid\bm{y})$. 
Applying this with 
$A=\mathcal{O}(r_0,\boldsymbol{\pi}_{1:\ell+1})$ 
and 
$B=\mathcal{O}(r_0,\boldsymbol{\pi}_{1:\ell})$ 
establishes the probabilistic monotonicity in $\ell$.  The endpoint identities given in Proposition~\ref{monotonicity_open_set} directly imply the stated limits.
\end{proof}

From Propositions~\ref{monotonicity_open_set} and \ref{monotonicity_prob}, the role of $\ell$  in $\mathcal{O}(r_0, \boldsymbol{\pi}_{1:\ell})$ is analogous to that of $\epsilon$ in $\mathcal{B}_{\epsilon}(r_0)$ defined in \eqref{eq:wade_ball}. As $\epsilon \rightarrow 0$, $\mathcal{B}_{\epsilon}(r_0) \rightarrow \{r_0\}$,  while as $\epsilon \rightarrow \infty$, $\mathcal{B}_{\epsilon}(r_0) \rightarrow \mathcal{P}$.  Thus, despite the interpretability challenges associated with $\mathcal{B}_{\epsilon}(r_0)$,  $\epsilon$ provides a straightforward way to connect the size of $\mathcal{B}_{\epsilon}(r_0)$ to a given amount of posterior probability since $\lim_{\epsilon \rightarrow 0} \Pr(\rho \in \mathcal{B}_{\epsilon}(r_0) \mid \bm{y}) = \Pr(\rho = r_0 \mid \bm{y})$ and $\lim_{\epsilon \rightarrow \infty} \Pr(\rho \in \mathcal{B}_{\epsilon}(r_0) \mid \bm{y}) = \Pr(\rho \in \mathcal{P} \mid \bm{y}) = 1$. For our proposed sequence, as stated in Proposition \ref{monotonicity_open_set}, if $\ell = 1$ then $\mathcal{O}(r_0, \boldsymbol{\pi}_{1:1}) = \mathcal{P}$, and as $\ell \rightarrow n$ we have $\mathcal{O}(r_0, \boldsymbol{\pi}_{1:\ell}) \rightarrow \{r_0\}$.  Thus,  increasing $\epsilon$ and decreasing $\ell$ both result in a larger subset of $\mathcal{P}$.  In a similar fashion, Proposition \ref{monotonicity_prob} establishes probability results that are similar to the limiting results associated with $\mathcal{B}_{\epsilon}(r_0)$.  Mainly, the posterior probability $\Pr(\rho \in \mathcal{O}(r_0, \boldsymbol{\pi}_{1:\ell}) \mid \bm{y})$ is monotonically decreasing from $\Pr(\rho \in \mathcal{O}(r_0, \boldsymbol{\pi}_{1:\ell}) \mid \bm{y})=1$ for $\ell=1$ to $\Pr(\rho = r_0 \mid \bm{y})$ for $\ell=n$. Our approach exploits this monotonic behavior and allows us to choose an $\ell$ such that $\Pr(\rho \in \mathcal{O}(r_0, \boldsymbol{\pi}_{1:\ell}) \mid \bm{y})  \ge \gamma$. This will produce a credible region for $\rho$ that is easily interpretable because all partitions have a subpartition in common.

Since any $(r_0, \boldsymbol{\pi})$ can be used to construct a credible region that accumulates at least a desired posterior probability $\gamma$, we next describe a strategy of selecting a credible region in a principled way.  Given the monotone behavior of $\{\mathcal{O}(r_0, \boldsymbol{\pi}_{1:\ell})\}_{\ell=1}^n$ stated in Propositions~\ref{monotonicity_open_set} and \ref{monotonicity_prob}, we know that for any $r_0$ and $\boldsymbol{\pi}$ there exists $n_0$ such that $\Pr(\rho \in \mathcal{O}(r_0, \boldsymbol{\pi}_{1:\ell}) \mid \bm{y}) \geq \gamma$ for all $\ell \leq n_0$. We use $\mathcal{O}(r_0, \boldsymbol{\pi}_{1:n_0})$ rather than any $\mathcal{O}(r_0, \boldsymbol{\pi}_{1:\ell})$ with $\ell < n_0$, because its defining subpartition contains the largest number of items. Among all possible triples $(r_0, \boldsymbol{\pi}, n_0)$, we first select those with the largest possible $n_0$, and then choose the pair $(r_0, \boldsymbol{\pi})$ that yields the subset $\mathcal{O}(r_0, \boldsymbol{\pi}_{1:n_0})$ with the highest posterior probability. Because this construction conditions on $n_0$ and, given $n_0$, selects the subset with the highest inclusion probability, we refer to it as the conditional high-inclusion probability subset (CHIPS) and the CHIPS credible region. The formal definition is given below.

\begin{definition}[CHIPS credible region for $\rho$] \label{def:HPD}
For a given credibility level $\gamma \in (0,1)$, let 
\begin{align}\label{eq:n0}
n_0 = \max\left\{\ell : \Pr(\rho \in \mathcal{O}(r,\tilde{\boldsymbol{\pi}}_{1:\ell}) \mid \bm{y}) \ge \gamma, \, r\in\mathcal{P},\, \mbox{$\boldsymbol{\pi}$ permutation of $(1, \ldots, n)$}\right\}.
\end{align}
and choose the pair $(r_0,\boldsymbol{\pi})$ such that
\begin{align} \label{eq:r0_pi}
(r_0,\boldsymbol{\pi}) = 
\arg\max_{(r,\tilde{\boldsymbol{\pi}})} 
\Pr(\rho \in \mathcal{O}(r,\tilde{\boldsymbol{\pi}}_{1:n_0}) \mid \bm{y}).
\end{align}
The resulting set $\mathcal{O}(r_0,\boldsymbol{\pi}_{1:n_0})$ serves as a CHIPS credible region for $\rho$.
\end{definition}

From now on, any reference to $(r_0, \boldsymbol{\pi}, n_0)$, the subpartition that defines them, or the set $\mathcal{O}(r_0,\boldsymbol{\pi}_{1:n_0})$ refers specifically to the quantities defined in Definition \ref{def:HPD}. Computational strategies for selecting $(r_0, \boldsymbol{\pi}, n_0)$ are described in Section \ref{sec:computation}. To develop these strategies, it is useful to recognize that characterizing  $\mathcal{O}(r_0,\boldsymbol{\pi}_{1:n_0})$ is equivalent to identifying the subpartition that defines it, that is, $r_{0,\boldsymbol{\pi}_{1:n_0}}$. We formalize this result in the following proposition.

\begin{proposition}[Invariance under shared subpartition]\label{prop:invariance}
Fix $\boldsymbol{\pi}_{1:n_0}$ and let $r_0 \in \mathcal{P}$. Then $\mathcal{O}(r_0,\boldsymbol{\pi}_{1:n_0})$ is completely determined by the subpartition $r_{0,\boldsymbol{\pi}_{1:n_0}}$. In particular, for any $r \in \mathcal{O}(r_0,\boldsymbol{\pi}_{1:n_0})$,
$
\mathcal{O}(r_0,\boldsymbol{\pi}_{1:n_0}) = \mathcal{O}(r,\boldsymbol{\pi}_{1:n_0}).
$
\end{proposition}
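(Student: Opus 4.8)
The plan is to reduce membership in $\mathcal{O}(r_0,\boldsymbol{\pi}_{1:n_0})$ to a condition that visibly depends on $r_0$ only through the subpartition $r_{0,\boldsymbol{\pi}_{1:n_0}}$, and then read off both assertions. First I would unfold Definition~\ref{def_open_set_seq}: the statement ``$r_{0,\boldsymbol{\pi}_{1:n_0}}$ is a subpartition of $r$'' means, exactly as already used in the proof of Proposition~\ref{monotonicity_open_set}, that $r_{\boldsymbol{\pi}_{1:n_0}} = r_{0,\boldsymbol{\pi}_{1:n_0}}$ as set partitions of $\boldsymbol{\pi}_{1:n_0}$. Hence
\[
\mathcal{O}(r_0,\boldsymbol{\pi}_{1:n_0}) = \bigl\{\, r \in \mathcal{P} : r_{\boldsymbol{\pi}_{1:n_0}} = r_{0,\boldsymbol{\pi}_{1:n_0}} \,\bigr\}.
\]
The right-hand side refers to $r_0$ only via the single object $r_{0,\boldsymbol{\pi}_{1:n_0}}$, so if two partitions $r_0$ and $r_0'$ satisfy $r_{0,\boldsymbol{\pi}_{1:n_0}} = r'_{0,\boldsymbol{\pi}_{1:n_0}}$, their defining conditions coincide and the two sets are equal. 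This proves the ``completely determined'' claim.

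For the ``in particular'' statement I would take an arbitrary $r \in \mathcal{O}(r_0,\boldsymbol{\pi}_{1:n_0})$. By the displayed characterization, $r_{\boldsymbol{\pi}_{1:n_0}} = r_{0,\boldsymbol{\pi}_{1:n_0}}$, i.e., $r$ and $r_0$ carry the same subpartition on $\boldsymbol{\pi}_{1:n_0}$. Feeding this pair into the determination result just established (with $r$ in place of $r_0'$) yields $\mathcal{O}(r,\boldsymbol{\pi}_{1:n_0}) = \mathcal{O}(r_0,\boldsymbol{\pi}_{1:n_0})$. Equivalently, one can argue directly by double inclusion: for any $r'' \in \mathcal{P}$, we have $r'' \in \mathcal{O}(r_0,\boldsymbol{\pi}_{1:n_0})$ iff $r''_{\boldsymbol{\pi}_{1:n_0}} = r_{0,\boldsymbol{\pi}_{1:n_0}} = r_{\boldsymbol{\pi}_{1:n_0}}$ iff $r'' \in \mathcal{O}(r,\boldsymbol{\pi}_{1:n_0})$.

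The one point that deserves care — and it is the closest thing to an obstacle in an otherwise immediate argument — is making precise that $r \mapsto r_{\boldsymbol{\pi}_{1:n_0}}$ is a well-defined map once $\boldsymbol{\pi}_{1:n_0}$ is held fixed, so that ``$r_{\boldsymbol{\pi}_{1:n_0}} = r_{0,\boldsymbol{\pi}_{1:n_0}}$'' is unambiguous; in particular one should pin down the handling of empty blocks in $\{C_1 \cap \boldsymbol{\pi}_{1:n_0}, \ldots, C_k \cap \boldsymbol{\pi}_{1:n_0}\}$ (e.g., discard them) so that equality of subpartitions is genuine equality of set partitions of $\boldsymbol{\pi}_{1:n_0}$. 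Once that convention is fixed, nothing about the posterior, or about $n_0$ being the maximizer in Definition~\ref{def:HPD}, is used: the statement is purely combinatorial and holds for every fixed $\boldsymbol{\pi}_{1:n_0}$ and every $r_0 \in \mathcal{P}$.
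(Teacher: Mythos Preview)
Your argument is correct and mirrors the paper's own proof: both rewrite $\mathcal{O}(r_0,\boldsymbol{\pi}_{1:n_0})$ as $\{q \in \mathcal{P} : q_{\boldsymbol{\pi}_{1:n_0}} = r_{0,\boldsymbol{\pi}_{1:n_0}}\}$ and then use either the immediate determination or an explicit double inclusion to obtain the equality of sets. Your extra remark about discarding empty blocks so that restriction is a bona fide map into partitions of $\boldsymbol{\pi}_{1:n_0}$ is a sensible clarification the paper leaves implicit.
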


\begin{proof}
By definition,
$
\mathcal{O}(r_0,\boldsymbol{\pi}_{1:n_0}) = \{ q \in \mathcal{P} : q_{\boldsymbol{\pi}_{1:n_0}} = r_{0,\boldsymbol{\pi}_{1:n_0}} \}.
$
If $r \in \mathcal{O}(r_0,\boldsymbol{\pi}_{1:n_0})$, then $r_{\boldsymbol{\pi}_{1:n_0}} = r_{0,\boldsymbol{\pi}_{1:n_0}}$. Take any $q \in \mathcal{O}(r_0,\boldsymbol{\pi}_{1:n_0})$; then $q_{\boldsymbol{\pi}_{1:n_0}} = r_{0,\boldsymbol{\pi}_{1:n_0}} = r_{\boldsymbol{\pi}_{1:n_0}}$, so $q \in \mathcal{O}(r,\boldsymbol{\pi}_{1:n_0})$. Hence $\mathcal{O}(r_0,\boldsymbol{\pi}_{1:n_0}) \subseteq \mathcal{O}(r,\boldsymbol{\pi}_{1:n_0})$. The same argument with $r_0$ and $r$ interchanged gives $\mathcal{O}(r,\boldsymbol{\pi}_{1:n_0}) \subseteq \mathcal{O}(r_0,\boldsymbol{\pi}_{1:n_0})$. Therefore,
$
\mathcal{O}(r_0,\boldsymbol{\pi}_{1:n_0}) = \mathcal{O}(r,\boldsymbol{\pi}_{1:n_0}),
$
which shows that $\mathcal{O}(r_0,\boldsymbol{\pi}_{1:n_0})$ depends only on the subpartition $r_{0,\boldsymbol{\pi}_{1:n_0}}$.
\end{proof}

In addition to finding a sequence of interpretable subsets (each of which has a subpartition in common) that accumulate a given posterior probability $\gamma$, our approach addresses scenarios in which a subset of items cannot be assigned to a specific cluster with high posterior probability. In particular, when certain units cannot be confidently assigned to a single cluster, the posterior probability of the corresponding subset $\mathcal{O}(r_0,\boldsymbol{\pi}_{1:\ell})$ may drop sharply as those units are incorporated. The following proposition illustrates this phenomenon in a simple case.
\begin{proposition}[Effect of uncertain assignments]\label{prop_drop_prob}
Consider $\mathcal{O}(r_0,\boldsymbol{\pi}_{1:\ell})$ with $\ell < n$, and let $r_{0,\boldsymbol{\pi}_{1:\ell}}$ contain $k$ clusters. Suppose that, conditional on $\rho \in \mathcal{O}(r_0,\boldsymbol{\pi}_{1:\ell})$, the $(\ell+1)$st unit has posterior probability $1/k$ of belonging to each of these $k$ clusters. Then
$$
\Pr(\rho \in \mathcal{O}(r_0,\boldsymbol{\pi}_{1:\ell+1}) \mid \bm{y}) = 
\frac{1}{k} \Pr(\rho \in \mathcal{O}(r_0,\boldsymbol{\pi}_{1:\ell}) \mid \bm{y}).
$$
Hence, adding an element with high assignment uncertainty can substantially reduce the posterior probability of $\mathcal{O}(r_0,\boldsymbol{\pi}_{1:\ell+1})$.
\end{proposition}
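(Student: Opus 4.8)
The plan is to reduce the claim to a single application of the multiplication rule for conditional probability, once an exact set-theoretic description of $\mathcal{O}(r_0,\boldsymbol{\pi}_{1:\ell+1})$ as a ``slice'' of $\mathcal{O}(r_0,\boldsymbol{\pi}_{1:\ell})$ is in place. Write $A_\ell := \mathcal{O}(r_0,\boldsymbol{\pi}_{1:\ell})$ and $A_{\ell+1} := \mathcal{O}(r_0,\boldsymbol{\pi}_{1:\ell+1})$; by Proposition~\ref{monotonicity_open_set} we already know $A_{\ell+1}\subseteq A_\ell$, so only the shape of the slice remains to be identified. If $\Pr(\rho\in A_\ell\mid\bm y)=0$ both sides vanish, so assume it is positive. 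It is also worth stating explicitly the assumption implicit in the statement: the newly added unit $\pi_{\ell+1}$ is placed by $r_0$ into one of the $k$ clusters of $r_{0,\boldsymbol{\pi}_{1:\ell}}$ rather than forming a new singleton; otherwise $\Pr(\rho\in A_{\ell+1}\mid\bm y)=0$ and the claimed equality would fail. Denote by $j^\star\in\{1,\ldots,k\}$ the index of that cluster, and write $G_1,\ldots,G_k\subseteq\boldsymbol{\pi}_{1:\ell}$ for the index sets of the $k$ blocks of $r_{0,\boldsymbol{\pi}_{1:\ell}}$.

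The core step is the set identity $A_{\ell+1}=A_\ell\cap E_{j^\star}$, where $E_{j^\star}:=\{q\in\mathcal{P}:\pi_{\ell+1}\text{ lies in the same block of }q\text{ as the elements of }G_{j^\star}\}$. This event is unambiguous on $A_\ell$: for $q\in A_\ell$ the restriction $q_{\boldsymbol{\pi}_{1:\ell}}$ equals $r_{0,\boldsymbol{\pi}_{1:\ell}}$, so $G_1,\ldots,G_k$ lie in distinct blocks of $q$, and hence ``$\pi_{\ell+1}$ co-clustered with one element of $G_{j^\star}$'' is equivalent to ``co-clustered with all of $G_{j^\star}$ and with none of $G_1,\ldots,G_k$ other than $G_{j^\star}$''. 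For the inclusion $A_{\ell+1}\subseteq A_\ell\cap E_{j^\star}$, any $q\in A_{\ell+1}$ satisfies $q_{\boldsymbol{\pi}_{1:\ell+1}}=r_{0,\boldsymbol{\pi}_{1:\ell+1}}$; restricting further to $\boldsymbol{\pi}_{1:\ell}$ gives $q\in A_\ell$, and reading off the block of $\pi_{\ell+1}$ gives $q\in E_{j^\star}$. Conversely, if $q\in A_\ell\cap E_{j^\star}$, then $q_{\boldsymbol{\pi}_{1:\ell}}$ has blocks $G_1,\ldots,G_k$ and $\pi_{\ell+1}$ joins the one containing $G_{j^\star}$, so $q_{\boldsymbol{\pi}_{1:\ell+1}}$ has blocks $G_1,\ldots,G_{j^\star}\cup\{\pi_{\ell+1}\},\ldots,G_k$, which is exactly $r_{0,\boldsymbol{\pi}_{1:\ell+1}}$ by the definition of $j^\star$; hence $q\in A_{\ell+1}$.

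With the identity established, the multiplication rule and the hypothesis on the conditional assignment probabilities finish the proof:
\[
\Pr(\rho\in A_{\ell+1}\mid\bm y)
=\Pr(\rho\in A_\ell\cap E_{j^\star}\mid\bm y)
=\Pr(\rho\in E_{j^\star}\mid\rho\in A_\ell,\bm y)\,\Pr(\rho\in A_\ell\mid\bm y)
=\frac{1}{k}\,\Pr(\rho\in A_\ell\mid\bm y).
\]
The same computation with an arbitrary conditional probability $p$ in place of $1/k$ yields the general multiplicative relation $\Pr(\rho\in A_{\ell+1}\mid\bm y)=p\,\Pr(\rho\in A_\ell\mid\bm y)$, of which the stated result is the uniform special case. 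I expect the only real obstacle to be the set identity of the second paragraph: because partitions are unlabeled, ``$\pi_{\ell+1}$ belongs to cluster $j$'' must be pinned down through co-clustering with the fixed index sets $G_1,\ldots,G_k$, and one has to check that on $A_\ell$ this cannot let $\pi_{\ell+1}$ bridge two of the $G_j$'s; once that bookkeeping is done, the remainder is a one-line calculation.
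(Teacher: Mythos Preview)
Your proof is correct and follows essentially the same approach as the paper: both identify $\mathcal{O}(r_0,\boldsymbol{\pi}_{1:\ell+1})$ as the subset of $\mathcal{O}(r_0,\boldsymbol{\pi}_{1:\ell})$ on which $\pi_{\ell+1}$ is assigned as in $r_0$, then apply the multiplication rule with the assumed conditional probability $1/k$. Your version is more careful---you spell out the set identity $A_{\ell+1}=A_\ell\cap E_{j^\star}$ via co-clustering (handling the label-free nature of partitions) and make explicit the implicit assumption that $r_0$ places $\pi_{\ell+1}$ into one of the existing $k$ clusters rather than a singleton---whereas the paper simply asserts these points ``by definition.''
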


\begin{proof}
By definition, $\mathcal{O}(r_0,\boldsymbol{\pi}_{1:\ell+1})$ is the subset of $\mathcal{O}(r_0,\boldsymbol{\pi}_{1:\ell})$ in which the $(\ell+1)$st element is assigned to the same cluster as in $r_0$. Conditional on $\rho \in \mathcal{O}(r_0,\boldsymbol{\pi}_{1:\ell})$, the probability that this assignment occurs is $1/k$, since the $(\ell+1)$st element is equally likely to belong to any of the $k$ clusters in $r_{0,\boldsymbol{\pi}_{1:\ell}}$. It follows that
\begin{align*}
\Pr(\rho \in \mathcal{O}(r_0,\boldsymbol{\pi}_{1:\ell+1}) \mid \bm{y})
&= \Pr(\rho \in \mathcal{O}(r_0,\boldsymbol{\pi}_{1:\ell+1}) 
   \mid \rho \in \mathcal{O}(r_0,\boldsymbol{\pi}_{1:\ell}), \bm{y}) \\
&\quad \times \Pr(\rho \in \mathcal{O}(r_0,\boldsymbol{\pi}_{1:\ell}) \mid \bm{y}) \\
&= \frac{1}{k} \Pr(\rho \in \mathcal{O}(r_0,\boldsymbol{\pi}_{1:\ell}) \mid \bm{y}),
\end{align*}
as claimed.
\end{proof}
Proposition~\ref{prop_drop_prob} highlights that adding elements with uncertain cluster assignments can sharply reduce the posterior probability of $\mathcal{O}(r_0,\boldsymbol{\pi}_{1:\ell})$. The main idea of our method is to permit the inclusion of additional elements into $r_{0,\boldsymbol{\pi}_{1:\ell}}$ as long as the posterior probability remains above the threshold $\gamma$, stopping only when further inclusion would reduce it below this level. The example in Figure~\ref{fig:toyexample} illustrates this point.

\section{Computing an Interpretable Credible Set} \label{sec:computation}

To compute the CHIPS credible region $\mathcal{O}(r_0, \boldsymbol{\pi}_{1:n_0})$,  we rely on the monotone sequence of subsets
$\{r\} = \mathcal{O}(r, \boldsymbol{\pi}_{1:n}) \subset
\ldots \subset
\mathcal{O}(r, \boldsymbol{\pi}_{1:1}) = \mathcal{P}$. By 
Proposition~\ref{prop:invariance} each $\mathcal{O}(r, \boldsymbol{\pi}_{1:\ell})$ is uniquely defined through its corresponding subpartition $r_{\boldsymbol{\pi}_{1:\ell}}$, and so the sequence of monotone subsets is completely described by the sequence of subpartitions $r_{\boldsymbol{\pi}_{1:1}}, \ldots ,r_{ \boldsymbol{\pi}_{1:n}}$ that are themselves monotone in the sense that $r_{\boldsymbol{\pi}_{1:\ell}}$ is a subpartition of $r_{\boldsymbol{\pi}_{1:(\ell+1)}}$.  

Reframing the optimization in Equations \eqref{eq:n0}-\eqref{eq:r0_pi} in terms of subpartitions, we propose an algorithm that optimizes over sequences of monotone subpartitions, improving scalability relative to algorithms that consider all partitions. For each partition $r \in \mathcal{P}$ and permutation $\boldsymbol{\pi}$ of $(1,\ldots,n)$, there exists a $\mathcal{O}(r, \boldsymbol{\pi}_{1:n}) \subset
\ldots \subset \mathcal{O}(r, \boldsymbol{\pi}_{1:1})$, giving rise to a very large number of sequences. To bypass the impracticality of an exhaustive search, we propose a greedy algorithm that identifies sequences of subpartitions by iteratively adding observations in a way that maximizes posterior probability. 

The algorithm is initialized with a specific observation, say $i_1$. 
For the remaining $n-1$ observations indexed by $i$, the posterior probabilities of $\{\{i_1, i\}\}$ and $\{\{i_1\}, \{i\}\}$ are calculated. We choose the highest of these posterior probabilities and let $i_2$ denote the corresponding observation; for example, the highest posterior probability subpartition may be  $\{\{i_1\}, \{i_2\}\}$. The process repeats for the remaining $n-2$ observations. At this stage, the algorithm evaluates all possible ways to cluster each observation $i \neq i_1, i_2$ with the current partition $\{\{i_1\}, \{i_2\}\}$. The potential partitions now include $\{\{i_1, i\}, \{i_2\}\}$, $\{\{i_1\}, \{i_2, i\}\}$, and $\{\{i_1\}, \{i_2\}, \{i\}\}$. Again, the algorithm selects the observation and subpartition that maximize the posterior probability. This process continues, adding one observation at a time, until no observations remain. For each starting point $i_1$, the algorithm produces a partition $r^{(i_1)} \in \mathcal{P}$ and a permutation $\boldsymbol{\pi}^{(i_1)}$ of $(1, \ldots, n)$, where the permutation $\boldsymbol{\pi}^{(i_1)}$ records the order in which observations were added. To simplify notation, we denote $r_{\boldsymbol{\pi}_{1:\ell}}^{(i_1)}$ as the subpartition of $r^{(i_1)}$ associated with the subset $\boldsymbol{\pi}_{1:\ell}^{(i_1)}$. Thus, for a given list of starting points $\mathcal{I}$, each $(r^{(i_1)}, \boldsymbol{\pi}^{(i_1)})$ with $i_1 \in \mathcal{I}$ corresponds to a sequence of monotone subpartitions  $r_{\boldsymbol{\pi}_{1:1}}^{(i_1)}, \ldots, r_{\boldsymbol{\pi}_{1:n}}^{(i_1)}$ and, therefore, subsets $\{r^{(i_1)}\} = \mathcal{O}(r^{(i_1)}, \boldsymbol{\pi}_{1:n}^{(i_1)}) \subset \ldots \subset \mathcal{O}(r^{(i_1)}, \boldsymbol{\pi}_{1:1}^{(i_1)}) = \mathcal{P}$.  We estimate $\mathcal{O}(r_0, \boldsymbol{\pi}_{1:n_0}) \subseteq \mathcal{P}$ by approximating the optimization problem in Equations \eqref{eq:n0} - \eqref{eq:r0_pi} as
\begin{equation}\label{eq:solution_n0_r0_pi}
\begin{aligned}
n_0 & \approx \max\left\{\ell : \Pr(\rho \in \mathcal{O}(r^{(i_1)},\boldsymbol{\pi}_{1:\ell}^{(i_1)} \mid \bm{y}) \ge \gamma, \, i_1 \in \mathcal{I} \right\},\\
(r_0,\boldsymbol{\pi}) & \approx
\underset{i_1 \in \mathcal{I}}{\rm argmax} 
\Pr(\rho \in \mathcal{O}(r^{(i_1)},\boldsymbol{\pi}_{1:n_0}^{(i_1)}) \mid \bm{y}).
\end{aligned}
\end{equation}


This algorithm, which is detailed in the supplemental material,  produced the $r_0$ that corresponds to the colored points in Figure \ref{fig:toyexample}. The resulting subset $\mathcal{O}(r_0, \boldsymbol{\pi}_{1:n_0}) \subset \mathcal{P}$ from this algorithm is an approximation of the CHIPS credible set of Definition \ref{def:HPD}.


The number of possible values for the starting point $i_1$ and the number of posterior draws used to estimate probabilities can influence the estimated CHIPS credible region $\mathcal{O}(r_0,\boldsymbol{\pi}_{1:n_0}) \subset \mathcal{P}$. Even though in theory this subset is  unique in almost all cases, when either of these numbers are small, the subpartition returned by the CHIPS algorithm may not be unique. 
Additional discussion, stability checks for the CHIPS algorithm, and practical guidelines are provided in the supplementary material. These results show the algorithm is quite stable as long as sufficiently many starting points and posterior draws are considered.

\section{Quantifying Clustering Uncertainty}\label{sec:summarizing_uncertainty}

In this section, we propose summaries of posterior uncertainty building on the credible set estimation methodology of Section 3.

\subsection{Quantifying Global Clustering Uncertainty}\label{sec:global_partition_uncertainty}
Our first summary characterizes how quickly the posterior probability associated with a subpartition decreases as additional units are added. We first identify, for a given number of units $\ell$, the subpartition clustering exactly $\ell$ units that accumulates the highest posterior probability. We denote this subpartition as $r^{\rm max}_{\boldsymbol{\pi}_{1:\ell}}$, which is computed using
$$
(r^{\rm max}_\ell, \boldsymbol{\pi}_{1:\ell}^{\rm max}) \approx 
\underset{(r^{(i_1)}, \boldsymbol{\pi}^{(i_1)}_{1:\ell})\mbox{ s.t. }{ i_1 \in \mathcal{I}}}{\rm argmax} \left\{ \Pr(\rho \in  \mathcal{O}(r^{(i_1)}, \boldsymbol{\pi}_{1:\ell}^{(i_1)})  ~|~ \bm{y}) \right\},
$$
where $r^{\rm max}_{\boldsymbol{\pi}_{1:\ell}}$ is the subpartition of $r^{\rm max}_\ell$ associated with the subset $\boldsymbol{\pi}_{1:\ell}^{\rm max}$. For each of these subpartitions, we also define the corresponding posterior probabilities,
$$
p^{\rm max}_{\ell} = \Pr\left(\rho_{\boldsymbol{\pi}^{\rm max}_{1:(\ell-1)}} = r^{\rm max}_{\boldsymbol{\pi}_{1:\ell}} ~ | ~ \bm{y}\right), \quad \ell = 1, \ldots, n.
$$
This sequence of probabilities is non-increasing.  If $n_0$ is the largest $\ell \in \{1,\ldots,n\}$ such that $p^{\rm max}_{\ell} \geq \gamma$, then 
$ (r_0, \boldsymbol{\pi}, n_0)$ in Equation \eqref{eq:solution_n0_r0_pi} satisfies $ (r_0, \boldsymbol{\pi}_{1:n_0}) = (r^{\rm max}_{n_0}, \boldsymbol{\pi}_{1:{n_0}}^{\rm max})$.

In scenarios with very low uncertainty, we expect the posterior distribution to concentrate its mass on a single partition or on very similar partitions. In such cases, the sequence of probabilities $p^{\rm max}_{\ell}$ remains nearly constant and close to one. (See the top row of Figure \ref{fig:toyexample}.) In contrast, in high-uncertainty scenarios, posterior probability is spread across a large subset of $\mathcal{P}$, making it difficult to identify subpartitions that cluster a moderate number of individuals while also having high posterior probability. In this case, we expect the sequence $p^{\rm max}_{\ell}$ to decrease rapidly. (See the bottom row of Figure \ref{fig:toyexample}.)

Hence, we propose using the curve $(\ell, p^{\rm max}_{\ell})$ to characterize how the posterior probability of subpartitions behaves as the number of associated items increases. If this ``CHIPS curve'' drops quickly from 1 to 0, this represents high uncertainty as it is not possible to group a moderate to large number of individuals with a high posterior probability. On the other hand, if the probability remains close to one, we can claim that it is possible to cluster a moderate to large number of individuals with high posterior probability.  An illustration of this curve is provided in the third column of Figure \ref{fig:toyexample}.

We call the area under the CHIPS curve ``AUChips'' and use it as a summary of $(\ell, p^{\rm max}_{\ell})$ quantifying  uncertainty, dividing by sample size $n$ to restrict the values to $[0,1].$ If the AUChips value is close to one, this indicates low uncertainty, while values close to zero reflect high uncertainty, which aligns with the patterns observed in Figure \ref{fig:toyexample}. 
Since $\ell$ is discrete, we use a piecewise linear function $(t, p^{\rm max}_{\lceil n t \rceil})$ for $t \in (0,1)$ as a surrogate for $(\ell, p^{\rm max}_{\ell})$ defining AUChips as
$$
{\rm AUChips} = \int_{0}^1 p^{\rm max}_{\lceil n t \rceil} dt.
$$
This formulation provides a principled way to quantify uncertainty in partitioning by capturing the overall decline of the posterior probability across different subpartition sizes. 

\subsection{Quantifying Unit-Level Uncertainty}\label{sec:unit_level_uncertainty}

While AUChips provides a global measure of clustering uncertainty, we quantify in this subsection the uncertainty associated with each individual observation. 
Individuals included in 
$r_{0,\boldsymbol{\pi}_{1:\ell}}$, the subpartition of $r_0$ associated with the subset $\boldsymbol{\pi}_{1:\ell}$, contribute less uncertainty than those not included. Hence, we focus on characterizing the impact
of the latter individuals 
$\{1,\ldots,n\} \setminus \boldsymbol{\pi}_{1:n_0}$  by measuring the cost, in terms of loss of probability, associated with adding each of them to the subpartition.

Specifically, for each $i \in \{1,\ldots,n\} \setminus \boldsymbol{\pi}_{1:n_0}$, if 
there are $k_0$ clusters in the subpartition $r_{0,\boldsymbol{\pi}_{1:n_0}}$, there are $k_0+1$ possibilities: assignment to an existing cluster or allocation to a new singleton cluster.
Consequently, we obtain $k_0+1$ posterior probabilities, each corresponding to one of these assignments. To summarize the impact of adding observation $i$ to the current subpartition, we propose using the maximum of these posterior probabilities, which we denote as $q_i^{\rm max}$ for $i \in \{1,\ldots,n\} \setminus \boldsymbol{\pi}_{1:n_0}$. By construction, $q_i^{\rm max}$ for any $i$ is always less than or equal to $p^{\rm max}_{n_0}$, and the smaller the value of $q_i^{\rm max}$, the greater the uncertainty associated with observation $i$. Recall that $p^{\rm max}_{n_0} = \Pr(\rho \in \mathcal{O}(r_0, \boldsymbol{\pi}_{1:n_0}) ~|~ \bm{y})$. Additionally, the quantity $p^{\rm max}_{n_0} - q_i^{\rm max}$ represents the decrease in probability associated with the addition of $i$ to the subpartition $r_{0,\boldsymbol{\pi}_{1:\ell}}$. 

For any given $i$, both $q_i^{\rm max}$ and $p^{\rm max}_{n_0} - q_i^{\rm max}$ provide intuitive and interpretable measures of individual uncertainty. Unlike existing measures such as entropy, whose units tend to lack intuition,
our proposed measure directly quantifies the effect of incorporating an observation into a partition in terms of posterior probabilities.

\subsection{Quantifying Uncertainty of Cluster-Specific Parameters} \label{sec:cluster_specific_inference}

In conducting model-based clustering, there is commonly interest in inferring not only the partition but also the model parameters within each cluster. To estimate cluster-specific parameters, a typical approach is to first obtain a point estimate of the partition and then estimate parameters based on partition fixed at this estimate (\citealt{McDowell2018DPGP}). A major issue with this approach, apart from ignoring uncertainty in the clustering, is that it requires conditioning on a partition point estimate whose posterior probability is usually close to zero.
Alternatively, multiple approaches have been proposed to align the cluster labels across MCMC samples (\citealt{JSSv069c01}); the aligned samples of cluster-specific parameters are then used to estimate posterior summaries (\citealt{BayesianFMM}). An issue with this strategy is that, in cases with moderate to high posterior uncertainty, it may not be possible to reliably align the clusters across MCMC iterations.

We avoid the need for conditioning on a partition whose posterior probability is usually close to zero or relabeling by building credible regions for cluster-specific parameters conditional on $r_{0,\bm{\pi}_{1:n_0}} = \{C_{0,1}, \ldots, C_{0,k_0}\}$. In this way, the user retains control over the posterior probability of the subpartition and the corresponding credible regions simultaneously.  For example, the user would be able to state with posterior probability 0.95 that a subset of individuals is clustered in a given way (that is, $r_{0,\bm{\pi}_{1:n_0}}$) and that the mean of one of the specific clusters lies within a given credible region. To this end, for $j  = 1, \ldots, k_0$, let $\theta_j$ denote a generic cluster-specific model parameter whose support is $\Theta \subseteq \mathbb{R}^p$. For example, in the bottom left plot of Figure \ref{fig:toyexample}, $k_0=4$ and $\theta_j$ for $j=1,\ldots,4$  corresponds to a mean parameter belonging to $\mathbb{R}^2$.  We estimate the parameter $\theta_j$ using the observations from units that belong to some cluster $C \in \rho$, where $\rho \in \mathcal{O}(r_0, \boldsymbol{\pi}_{1:n_0}) \subset \mathcal{P}$ and $C_{0,j} \subseteq C \in \rho$ for $C_{0,j} \in r_{0,\bm{\pi}_{1:n_0}}$.  Critically, $\theta_j$ is estimated using not just units that belong to $C_{0,j}$, but all those that belong to $C$.  While $C_{0,j}$ represents a fixed cluster within the subpartition, the cluster $C$ is random.  Thus, for example, the units colored red in the bottom left plot of Figure \ref{fig:toyexample} are only a subset of the observations used to estimate uncertainty about the estimated mean value of the red cluster.  Hence, in estimating $\theta_j$, the uncertainty quantified by the credible region reflects both the randomness of the cluster $C$ and the variability of the observations it contains. 

Thus, our inferential procedure for $\theta_j$ is based on the posterior distribution conditional on $\rho \in \mathcal{O}(r_0, \boldsymbol{\pi}_{1:n_0})$, whose density can be expressed as
\begin{eqnarray}\label{eq:prob_theta_j_0}
&  & \hspace{-15mm} 
p\left(\theta_j ~|  ~ \rho \in \mathcal{O}(r_0, \boldsymbol{\pi}_{1:n_0}), ~ \bm{y} \right) 
\\\nonumber & = & \frac{p\left(\theta_{j},\rho\in\mathcal{O}(r_{0},\boldsymbol{\pi}_{1:n_{0}})|\bm{y}\right)}{p\left(\rho\in\mathcal{O}(r_{0},\boldsymbol{\pi}_{1:n_{0}})|\bm{y}\right)}
\\\nonumber & = & \sum_{r\in\mathcal{O}(r_{0},\boldsymbol{\pi}_{1:n_{0}})}\frac{p\left(\theta_{j},\rho=r|\bm{y}\right)}{p\left(\rho\in\mathcal{O}(r_{0},\boldsymbol{\pi}_{1:n_{0}})|\bm{y}\right)}\\\nonumber
 & = & \sum_{r\in\mathcal{O}(r_{0},\boldsymbol{\pi}_{1:n_{0}})}\frac{p\left(\theta_{j}|\rho=r,\bm{y}\right)p\left(\rho=r|\bm{y}\right)}{p\left(\rho\in\mathcal{O}(r_{0},\boldsymbol{\pi}_{1:n_{0}})|\bm{y}\right)}\\\label{eq:prob_theta_j_1}
 & = & \sum_{r\in\mathcal{O}(r_{0},\boldsymbol{\pi}_{1:n_{0}})}p\left(\theta_{j}|\rho=r,\bm{y}\right)p\left(\rho=r|\rho\in\mathcal{O}(r_{0},\boldsymbol{\pi}_{1:n_{0}}),\bm{y}\right)\\ \label{eq:prob_theta_j_2}
 & = & \sum_{r\in\mathcal{O}(r_{0},\boldsymbol{\pi}_{1:n_{0}})}p\left(\theta_{j}|C_{0,j}\subseteq C\in \rho, \rho=r, \bm{y}\right)p\left(\rho=r|\rho\in\mathcal{O}(r_{0},\boldsymbol{\pi}_{1:n_{0}}),\bm{y}\right)\\
 \label{eq:prob_theta_j_3}
 & = & \sum_{r\in\mathcal{O}(r_{0},\boldsymbol{\pi}_{1:n_{0}})}p\left(\theta_{j}|y_{i},~i\in C,~C_{0,j}\subseteq C\in \rho, \rho=r\right)p\left(\rho=r|\rho\in\mathcal{O}(r_{0},\boldsymbol{\pi}_{1:n_{0}}),\bm{y}\right)
\end{eqnarray}
The transition from $\eqref{eq:prob_theta_j_0}$ to $\eqref{eq:prob_theta_j_1}$ follows directly from the definition of conditional density. The step from $\eqref{eq:prob_theta_j_1}$ to $\eqref{eq:prob_theta_j_2}$ relies on the fact that $\theta_j$ depends only on the cluster in $\rho$ that contains the observations in $C_{0,j}$, denoted by $C$. Once we condition on $\rho$, the posterior distribution of $\theta_j$ depends only on the observations belonging to this random cluster $C$. The transition from $\eqref{eq:prob_theta_j_2}$ to $\eqref{eq:prob_theta_j_3}$ follows from recognizing that $\theta_j$ is informed using observations from the cluster $C$ that contains $C_{0,j}$. The statements in $\eqref{eq:prob_theta_j_0}$–$\eqref{eq:prob_theta_j_3}$ further highlight that $\theta_j$ is not estimated solely from the observations in $C_{0,j}$ but also incorporates information from all other observations in $C \in \rho$, weighted by the uncertainty in $\rho$ as quantified by its posterior distribution conditional on $\rho \in \mathcal{O}(r_{0},\boldsymbol{\pi}_{1:n_{0}})$.

It is straightforward to sample from the distribution with density \eqref{eq:prob_theta_j_0} using \eqref{eq:prob_theta_j_3}. First sample a partition $\rho$ from its posterior distribution conditional on $\rho \in \mathcal{O}(r_0, \boldsymbol{\pi}_{1:n_0})$. Given this sampled partition, identify the cluster $C$ that contains the index set $C_{0,j}$, and then draw $\theta_j$ from its corresponding conditional posterior distribution based on the observations in that cluster.  There is no ambiguity in the choice of $C$ within $\rho$, since $\rho$ is assumed to be in $\mathcal{O}(r_0, \boldsymbol{\pi}_{1:n_0})$, which means that there is a unique cluster in $\rho$ that contains the observations in $C_{0,j}$. Since our approach is based on $r_{0,\bm{\pi}_{1:n_0}}$, the number of cluster-specific parameters for which inference can be made corresponds to the number of clusters in $r_{0,\bm{\pi}_{1:n_0}}$, that is, $k_0$. In this context, $k_0$ can be thought of as a lower bound estimate of the number of clusters.

Equation \eqref{eq:prob_theta_j_0} can be used to construct a credible interval for $\theta_j$ as follows.  Let $I_j \subset \Theta_j$ denote an interval or region of $\Theta$ that contains a given posterior probability, denoted as $\alpha$. Then a posterior joint credible region for $\rho$ and $\theta_j$ is based on the following posterior probability 
\begin{align*}
\Pr(\rho \in \mathcal{O}(r_0, \boldsymbol{\pi}_{1:n_0}), \theta_j \in I_j ~|~ \bm{y})   = 
\Pr(\theta_j \in I_j ~| ~ \rho \in \mathcal{O}(r_0, \boldsymbol{\pi}_{1:n_0}), ~ \bm{y})  
\Pr(\rho \in \mathcal{O}(r_0, \boldsymbol{\pi}_{1:n_0}) ~|~ \bm{y}) \geq  \alpha\gamma.  
\end{align*}
The posterior probability of the joint statement $[\rho \in \mathcal{O}(r_0, \boldsymbol{\pi}_{1:n_0}), \theta_j \in I_j]$ is upper-bounded by $\gamma > \Pr(\rho \in \mathcal{O}(r_0, \boldsymbol{\pi}_{1:n_0}) ~|~ \bm{y})$, 
while the posterior probability of region $I_j$
is lower bounded by $\alpha\gamma$. In practice, $\alpha$ and $\gamma$ should be set so that 
$\alpha\gamma$ equals the target credible level. To estimate a credible set for $\theta_j$ from MCMC samples, simply use the iterates for which $r_{0,\bm{\pi}_{1:n_0}}$ holds.  This approach is demonstrated in Section \ref{sec:illustrationssimulations}.

\subsection{Cluster-Level Posterior Probabilities}\label{sec:cluster_level_probs}
Given the subpartition 
$r_{0,\bm{\pi}_{1:n_0}} = \{C_{0,1}, \ldots, C_{0,k_0}\}$, 
it is possible to compute the posterior probability  of each $C_{0,j}$. 
This quantity can be viewed as a marginal probability, or more generally, as a higher-order co-clustering probability. 
Specifically, for each cluster $C_{0,j}$ in the subpartition, we compute
$
\Pr\big(\rho \in \mathcal{C}_j \mid \bm{y}\big),
$
where $\mathcal{C}_j \subseteq \mathcal{P}$ is such that if $r \in \mathcal{C}_j$ then there exists a $C \in r$ with $C_{0,j} \subseteq C$. This probability corresponds to the posterior probability that the observations in $C_{0,j}$ are clustered together, which by definition is greater than or equal to $\Pr(\rho \in \mathcal{O}(r_0, \boldsymbol{\pi}_{1:n_0}) ~|~ \bm{y})$.  For example, the posterior probabilities of the clusters in the subpartition displayed in the bottom left plot of Figure \ref{fig:toyexample} are respectively 0.911 - red, 0.795 - green, 0.786 - blue, and 0.919 - cyan.  


\subsection{Partition Estimation with CHIPS and SALSO}
It is also possible to produce a partition point estimate using the subpartition $r_{0,\bm{\pi}_{1:n_0}}$.  Once $r_{0,\bm{\pi}_{1:n_0}}$ is obtained given the threshold $\gamma$, the remaining $n - n_0$ units can be assigned to clusters using the SALSO algorithm (\citealt{dahl_etal:2022}). In the simulations, we show that  combining the CHIPS and SALSO algorithms can, under certain circumstances, produce a more accurate point estimate of $\rho$. 

\section{Simulations and Illustrations} \label{sec:illustrationssimulations}
Note that the previous section was written agnostic to the model used to produce MCMC samples of cluster labels.  This was intentional, as  $\mathcal{O}(r_0, \boldsymbol{\pi}_{1:n_0})$ can be computed regardless of the model used, so long as MCMC samples from the posterior distribution of $\rho$ are available. (For examples of Bayesian models with clustering, see \citealt{argiento&deIorio:2022} and \citealt{Mueller2015}.)  To illustrate the methodology of the previous sections, we will use the following finite mixture model.  Let $\bm{y}_i \in \mathbb{R}^d$ denote the $d$-dimensional response of the $i$th unit and $z_i$ denote its corresponding component label:  
\begin{align} \label{eq:finite_mixture}
\begin{split}
    \bm{y}_i ~|~ z_i & \stackrel{ind}{\sim} N(\bm{\mu}_{z_i}, \bm{\Sigma}_{z_i}), \ i = 1, \ldots, n\\
    \Pr(z_i = k ~|~ \bm{\pi}) & = \pi_k, \ k = 1, \ldots, M   
\end{split}    
\end{align}
In the simulations that we detail shortly, we consider two versions of \eqref{eq:finite_mixture}.  The first, which we denote as the ``z'' model, treats $z_i$ as the only unknown in the model.  That is, the atoms ($\bm{\mu}_k, \bm{\Sigma}_k$), weights $\bm{\pi}$, and number of components $M$ are all set to their true values used to generate the data.  For this model, all the partition uncertainty originates from the cluster labels and as a result permits visualizing more clearly how $\mathcal{O}(r_0, \boldsymbol{\pi}_{1:n_0})$ assesses uncertainty.    The second model, which we refer to as the ``z \& Atoms \& M'' model treats all components of the model as unknown.  This model is fit using custom R code based on the approach found in \cite{fruhwirth2021generalized} and the following prior distributions
$(\bm{\mu}_k, \bm{\Sigma}_k) \stackrel{iid}{\sim} {\rm MVNIW}(\bm{\mu}_0, \kappa_0, \nu_0, \bm{\Psi}_0)$ with $\bm{\mu}_0$ a $d$-dimensional vector of zeros, $\kappa_0 = 0.5$, $\nu_0 = d + 2$, $\bm{\Psi}_0 = 0.75\times {\rm diag}(d)$, and $\bm{\pi} \sim {\rm Dirichlet}(\alpha/K, \ldots, \alpha/K)$, with $\alpha \sim {\rm Gamma}(1,1)$, and $M-1 \sim {\rm Poisson}(4) $.  

\subsection{Synthetic Example Revisited}\label{sec:toy_example}
Consider the example introduced in Section \ref{sec:intuition_building}.  We sampled cluster labels $z_i$ based on the ``z''-model using their full conditionals.  This ensured that the only variability is due to the cluster labels and also that the point at $(0,0)$ had equal probability of being allocated to each of the clusters.   The results of running the CHIPS algorithm on the cluster labels are provided in Figure \ref{fig:toyexample}, where the colored points in the figure depict $r_{0,\boldsymbol{\pi}_{1:n_0}}$ and the gray points correspond to observations that do not belong to $r_{0,\boldsymbol{\pi}_{1:n_0}}$. The size of the gray points is proportional to their $p^{\rm max}_{n_0} - q^{\rm max}$ value.  Thus, adding points to $r_{0,\boldsymbol{\pi}_{1:n_0}}$  whose circumference is large reduces $\Pr(\rho \in \mathcal{O}(r_0, \boldsymbol{\pi}_{1:n_0}) ~|~ \bm{y})$ more drastically.  The ellipses in each figure correspond to  95\% credible regions computed using the approach detailed in Section \ref{sec:cluster_specific_inference} and the ``$\times$''  symbol corresponds to the true value of the atom used to generate data. 

The first column of Figure \ref{fig:toyexample} corresponds to running the CHIPS algorithm until an $r_0$ and $\boldsymbol{\pi}_{1:n_0}$ are found that produces $\Pr(\rho \in \mathcal{O}(r_0, \boldsymbol{\pi}_{1:n_0}) \mid \bm{y}) \ge 0.5$.  The second column is similar in that it displays the $r_0$ and $\boldsymbol{\pi}_{1:n_0}$ that produces $\Pr(\rho \in \mathcal{O}(r_0, \boldsymbol{\pi}_{1:n_0}) \mid \bm{y}) \ge 0.95$, and the third column displays the curves $(\ell, p^{\rm max}_{\ell})$, which characterize how the posterior probability of subpartitions behaves as the number of clustered items increases.  In these plots, we highlight the subpartitions that have 0.5 and 0.95 posterior probability using a horizontal line.   Each row of the figure corresponds to a different within cluster variance $\sigma^2$.  

Focusing on the top left plot of Figure \ref{fig:toyexample}, the subpartition that has posterior probability of at least 0.5 includes all points except the center point.  Including this point in the partition decreases the posterior probability of the partition to 0.25 as expected and as seen in the $(\ell, p^{\rm max}_{\ell})$ curve plot, where the last point allocated reduces the partition probability to 0.25.   The subpartition with at least 0.95 posterior probability displayed in the top middle figure excludes two additional points, but the middle point still reduces the subpartition probability to about 0.25. As cluster overlap increases, the subpartitions contain fewer units; the subpartition with a posterior probability of at least 0.95 with the largest overlap is made up of only 14 of the $n=100$ observations.  Even with so few observations, using $r_{0,\boldsymbol{\pi}_{1:n_0}}$ as described in Section \ref{sec:cluster_specific_inference} to produce credible intervals for cluster-specific parameters results in intervals that almost contain all of the values used to generate the data.

The last column of Figure \ref{fig:toyexample} presents the curves $(\ell, p^{\rm max}_{\ell})$, which characterize how the posterior probability of subpartitions behaves as the number of clustered items increases. In the last panel of the first row, where the within-cluster variance is low ($\sigma^2 = 0.1$), the curve remains nearly constant and close to one, indicating low uncertainty, as the posterior distribution concentrates most of its mass on a small number of partitions, allowing a moderate or large number of individuals to be clustered with high posterior probability. As $\sigma^2$ increases, the overlap between the distributions associated with each cluster grows, leading to more uncertainty in the partitions. This is reflected in the last panels of the second and third rows, where the curves decline more rapidly, showing that the posterior probability is more evenly spread across different, less distinct partitions, making it difficult to cluster a moderate number of individuals with high posterior probability. 

In the first row of Figure \ref{fig:toyexample}, where $\sigma^2 = 0.1$, the AUChips is approximately 0.996, confirming that the posterior distribution is highly concentrated on a small number of partitions. In the second row, with $\sigma^2 = 0.25$, the AUChips drops to 0.803, reflecting increased uncertainty. In the third row, where $\sigma^2 = 0.5$, the AUChips further decreases to 0.405, indicating that the posterior probability is more spread across many different partitions, making it possible to cluster only a small number of individuals with high posterior probability.

\subsection{Simulation Study}
\label{sec:simulation_study}
We now conduct a simulation study to investigate several aspects of our proposed methodology.  To do this we generate datasets similar to that displayed in Figure \ref{fig:toyexample}, but with a more granular sequence of within cluster variance $\sigma \in \{0.3, 0.4, 0.5, 0.6, 0.7, 0.8, 0.9, 1\}$.  Two sample sizes are considered, one with 25 observations per cluster for a total of $n=100$ observations and the other with 100 observations per cluster for a total of $n=400$ observations. One hundred datasets were generated and to each we fit the ``z" and ``z \& Atoms \& M'' models.  For each model fit, we found $\mathcal{O}(r_0, \boldsymbol{\pi}_{1:n_0})$ for $\gamma \in \{0.5, 0.75, 0.95\}$ using both the variation of information (VI) and Binder loss functions.  Results for $n=100$ and for VI loss are provided in Figures \ref{fig:simulation_compare_with_salso} and \ref{fig:simulation_compare_with_wade}.  In the online supplementary material, we provide results for $n=400$ under VI and for both sample sizes under Binder loss.  As expected, simulation results are quite dependent on the loss function employed.  In some instances, relationships highlighted here remain the same but, in others, they change relative to Binder loss.  See the supplementary material for more details.

We first focus on results from the ``z'' model fit shown in the first column of Figure \ref{fig:simulation_compare_with_salso}.  The first row  displays the Rand index (RI) computed for only the observations that are contained in $\mathcal{O}(r_0, \boldsymbol{\pi}_{1:n_0})$.  As expected the RI is very close to 1 and remains relatively constant, except when $\gamma = 0.5$, in which it decreases as a function of cluster overlap.  The reason this is expected is because the cardinality of $\mathcal{O}(r_0, \boldsymbol{\pi}_{1:n_0})$ decreases monotonically as a function of cluster overlap, also shown in the third row of Figure \ref{fig:simulation_compare_with_salso}.  This same pattern holds for AUChips (see fourth row).  As the cluster overlap increases, uncertainty in the cluster labels increases which results in a smaller AUChips.  The second row displays the difference in the RI between the CHIPS \& SALSO point estimate and the SALSO point estimate. Interestingly, it appears when overlap increases, the CHIPS \& SALSO point estimate performs better under the ``z'' model. However, this not the case when the ``z \& Atoms \& M'' model is fit.  The take-home message for the ``z'' model is that, as cluster overlap increases and for given threshold $\gamma$, AUChips decreases and the subpartition contains less items.  Focusing now on the second column which corresponds to the ``z \& Atoms \& M'' (a more natural model fit in practice), the narrative changes slightly in that when there is massive overlap, it is often the case there is small uncertainty associated with partition (see AUChips) but the partition is wrong (see the RI subpartition).  This is actually to be expected because the MCMC samples of $z_i$ with large cluster overlap allocate all units to one just cluster with high probability.

\begin{figure}
\begin{center}
\includegraphics[scale=0.55, page=1]{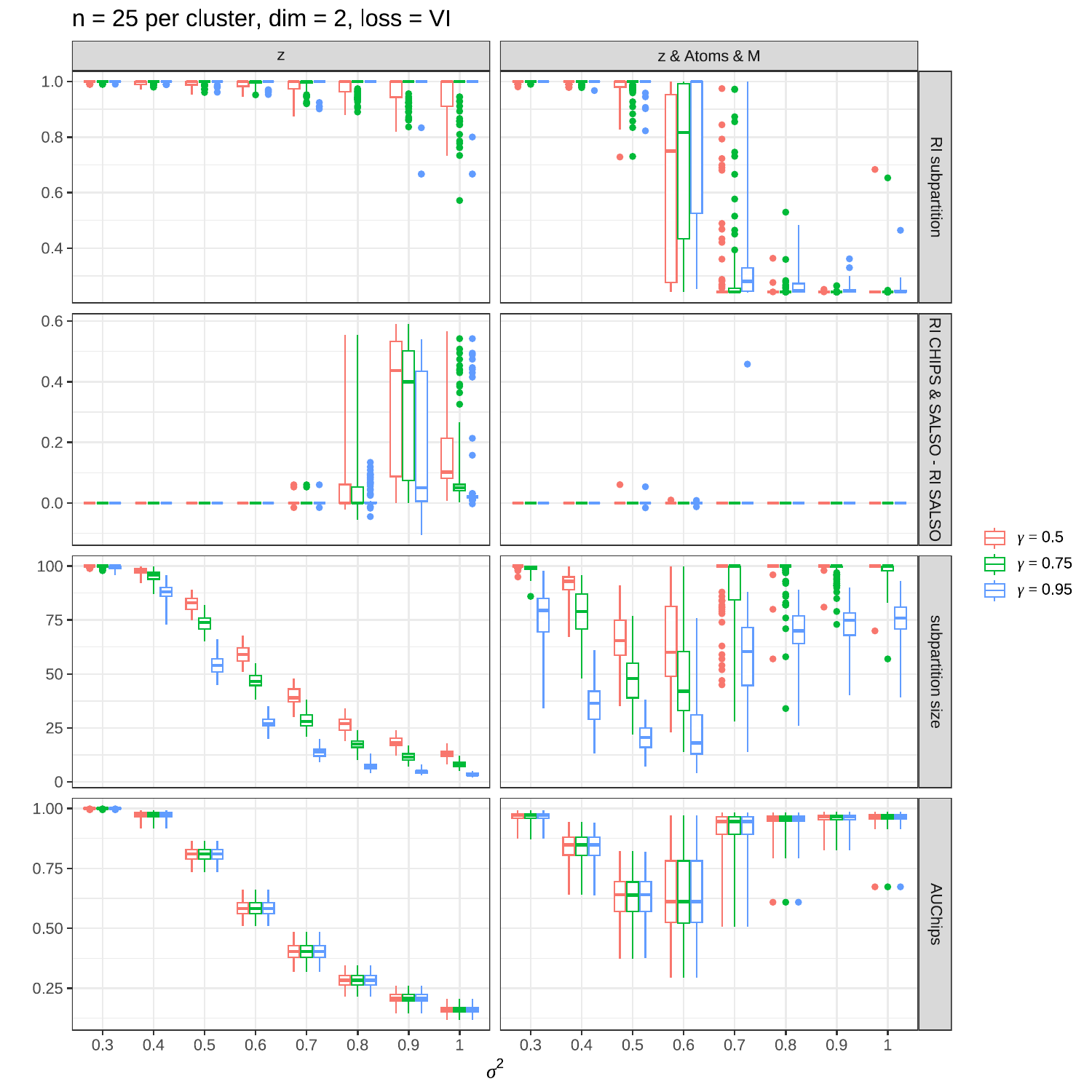}
\end{center}
\caption{Results from the simulation study in Section \ref{sec:simulation_study}. The third row displays the number of observations included in the subpartition and the bottom row shows the AUChips.}
\label{fig:simulation_compare_with_salso}
\end{figure}


In Figure \ref{fig:simulation_compare_with_wade}, we compare our $\mathcal{O}(r_0, \boldsymbol{\pi}_{1:n_0})$ to \cite{wade&ghahramani:2018}'s $\mathcal{B}_{\epsilon}(r_0)$.  The RI between partitions in $\mathcal{B}_{\epsilon}(r_0)$ and the true partition is on average higher than for the partitions in $\mathcal{O}(r_0, \boldsymbol{\pi}_{1:n_0})$.  This is somewhat expected since $\mathcal{O}(r_0, \boldsymbol{\pi}_{1:n_0})$ contains all $r \in \mathcal{P}$ for which $r_{0,\bm{\pi}_{1:\ell}}$ is a subpartition.  Generally speaking, this results in $|\mathcal{O}(r_0, \boldsymbol{\pi}_{1:n_0})| > |\mathcal{B}_{\epsilon}(r_0)|$ and therefore results in more partitions that have a small RI with the true partition. However,  when computing RI only for elements in $r_{0,\bm{\pi}_{1:\ell}}$ --- shown in the second row of Figure \ref{fig:simulation_compare_with_wade} --- the RI between the partitions in $\mathcal{B}_{\epsilon}(r_0)$ and the truth are worse on average when model ``z'' is used and the RI value decays as $\sigma$ increases.  That said, the differences are much smaller under the ``z \& Atoms \& M'' model.    The variability of RI between partitions in $\mathcal{B}_{\epsilon}(r_0)$ and the true partition is slightly higher than that for those in $\mathcal{O}(r_0, \boldsymbol{\pi}_{1:n_0})$ and this is particularly true when only considering elements that belong to $r_{0,\bm{\pi}_{1:\ell}}$.  The upshot is that our method performs (in terms of Rand index to the truth) about as well as does the method of \cite{wade&ghahramani:2018}, yet our method provides a more interpretable understanding of the how partitions in the credible set are related to each other.

\begin{figure}
\begin{center}
\includegraphics[scale=0.55, page=2]{plots/simulation_resultsRI_5.pdf}
\end{center}
\caption{Results from the simulation study comparing properties of our interpretable credible set to that of the credible ball of \cite{wade&ghahramani:2018}.  The threshold was set to $\gamma = 0.75$ when forming both credible sets.}
\label{fig:simulation_compare_with_wade}
\end{figure}

Finally, Figure \ref{fig:cluster_parameter_estimation} shows the results for estimating cluster-specific parameters.  In both plots  ``CHIPS''  corresponds to the approach detailed in Section \ref{sec:cluster_specific_inference} and ``SALSO'' corresponds to conditioning on the salso partition estimate.  The plot on the left corresponds to the average number of detected clusters, and the plot on the right displays the coverage percentage associated with the detected clusters. Results for ``SALSO'' do not change as a function of $\gamma$.  As expected, as cluster overlap increases (i.e., $\sigma^2$ increases), detecting the number of clusters becomes more challenging.  Under the ``z'' model, our approach is more accurate at estimating the number of clusters for small $\gamma$ values.  For the ``z \& Atoms \& M'' model, our approach underestimates the number of clusters relative to ``SALSO''.   
However, our approach maintains good coverage for the ``z'' model regardless of cluster overlap, while  coverage decreases under the ``z \& Atoms \& M'' model as cluster overlap increases. This suggests that reduced coverage reflects model performance rather than a limitation of our method for constructing suitable credible regions. Across all scenarios, our approach maintains a coverage rate that is at least as good as the approach that conditions on the SALSO estimate.
The results displayed in Figure \ref{fig:cluster_parameter_estimation} are based on VI loss; those based on Binder loss result in somewhat different patterns and are provided in the supplementary material.


\begin{figure}[htbp]
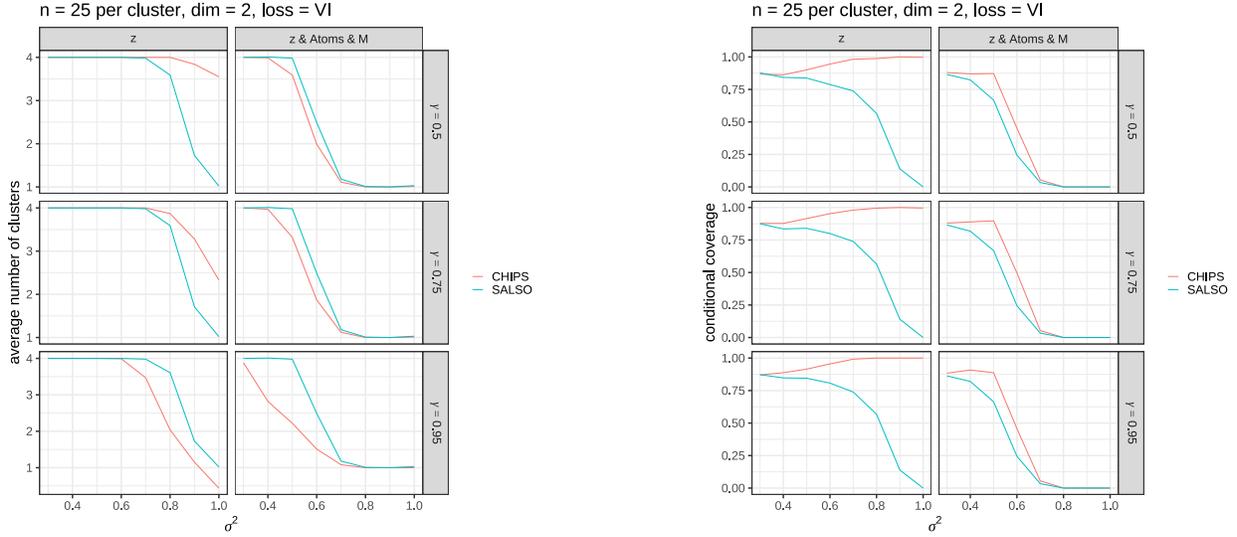

  \centering
  \begin{minipage}{0.44\textwidth}
    \centering
    \includegraphics[width=\linewidth, page=3]{plots/simulation_resultsRI_5.pdf}
  \end{minipage}
  \hfill
  \begin{minipage}{0.44\textwidth}
    \centering
    \includegraphics[width=\linewidth, page=4]{plots/simulation_resultsRI_5.pdf}
  \end{minipage}
  \caption{Results from simulation study that compares the approach of producing credible intervals detailed in Section \ref{sec:cluster_specific_inference}, and conditioning on the SALSO point estimate.  The left figure displays results associated with the number of clusters detected and the right with the number of detected clusters that resulted in credible regions that covered the truth.} \label{fig:cluster_parameter_estimation}
\end{figure}



\subsection{Exercise Science Application}

Our first nonsynthetic-data demonstration comes from the field of biomechanics and is a functional data application in which inference about cluster-specific parameters is of interest in addition to uncertainty quantification.  To provide context, biomechanics is the study of mechanical principles (e.g., forces and angles) applied to living organisms. A common focus of biomechanical studies is to discover how forces and angles affect musculoskeletal health.  Recently researchers have learned that anterior cruciate ligament injury and the subsequent reconstructive surgery (ACLR) affect walking biomechanics, in different ways, for different ACLR patients.   For example, some ACLR patients walk with a knee that is flexed less during the early part of the ground contact phase of walking, which is often associated with decreased vertical ground reaction force (vGRF) during the same walking phase. Further, these ACLR patients, who experience less vGRF during the early part of ground contact, often report less desirable outcomes in areas such as pain, stiffness, or quality of life (\citealt{hayden_etal:2024}). In an attempt to discover the different ways ACLR affects walking biomechanics,  196 ACLR patients were recruited to participate in a study that required them to walk on a treadmill during which vGRF was measured through the entire gait cycle.  The raw data are displayed in the top left plot of Figure \ref{fig:biomechanic_vgrf_results}. To these data, a variant of the model-based clustering approach described in \cite{page&quintana:2015} was fit using the {\tt ppmSuite} {\tt R}-package (\citealt{page_ppmSuite:2025}) by collecting 1000 MCMC samples from the posterior distribution of the cluster labels.  As a point of information the posterior distribution on the number of clusters was concentrated on two values, four clusters with 0.978 posterior probability and five clusters with 0.022  posterior probability.  

We applied CHIPS for $\gamma = 0.95$ to the collected MCMC samples.  The results are shown in Figure \ref{fig:biomechanic_vgrf_results}.  The top right panel displays the CHIPS \& SALSO point estimate (the same as the SALSO point estimate in this case), which has a posterior probability close to $0$, and the bottom left plot displays the subpartition consisting of 53 subjects, which has a posterior probability of 0.953.  The AUChips value is 0.51, indicating substantial uncertainty.  The left plot of Figure \ref{fig:biomechanic_cluster_specific_parm_colclustering} displays the pairwise co-clustering probability matrix. Although there is a high posterior probability of four clusters, there is uncertainty associated with which units are clustered together.  The four clusters in the subpartition have posterior probabilities 0.993 - black, 0.983 - red, 0.981 - green, and 0.99 - blue (see Section \ref{sec:cluster_level_probs}).  The dashed gray line in the bottom plot of Figure \ref{fig:biomechanic_vgrf_results} corresponds to the subject with the smallest $q^{\rm max}$ value.  Including this subject in the subpartition would reduce the posterior probability from 0.95 to 0.41. This is consistent with the unusual curve for this subject, with a deep valley similar to the red cluster but with peaks distinct from the other curves.

In this application, inference associated with cluster-specific mean vGRF curves is of interest.  Applying the approach detailed in Section \ref{sec:cluster_specific_inference} results in the right plot of Figure \ref{fig:biomechanic_cluster_specific_parm_colclustering}.  The figure displays 95\% credible bounds for each cluster's mean curve.  As a result, the level of the joint credible region for partition and curve is $\alpha\times \Pr(\rho \in \mathcal{O}(r_0, \boldsymbol{\pi}_{1:n_0}) \mid \bm{y}) = 0.95 \times 0.953 = 0.90535$.  The red cluster is the type of vGRF curve that biomechanists attach to healthy individuals indicating that this group of ACLR patients have not altered biomechanics, while the purple cluster curve displays large changes in walking biomechanics.

\begin{figure}
\begin{center}
\includegraphics[scale=0.7]{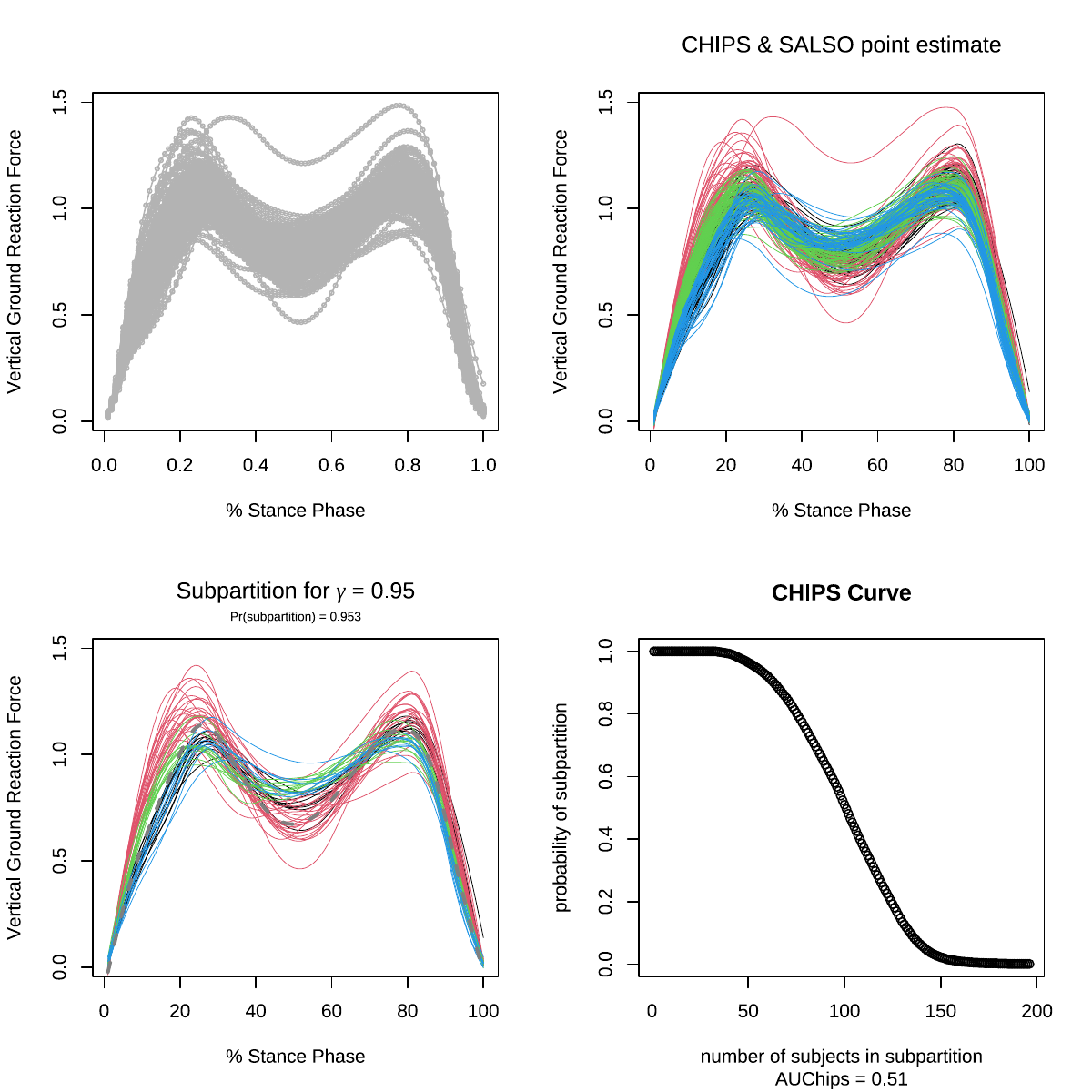}
\end{center}
\caption{The top left figure displays the vGRF curve for each of the 196 subjects, the top right gives a point estimate of $\rho$ using CHIPS \& SALSO, the bottom left displays that subpartition which has posterior probability of 0.95, and the bottom right is the CHIPS curve.}
\label{fig:biomechanic_vgrf_results}
\end{figure}



\begin{figure}[htbp]  
  \centering
  \begin{minipage}{0.44\textwidth} 
    \centering
    \includegraphics[width=\linewidth]{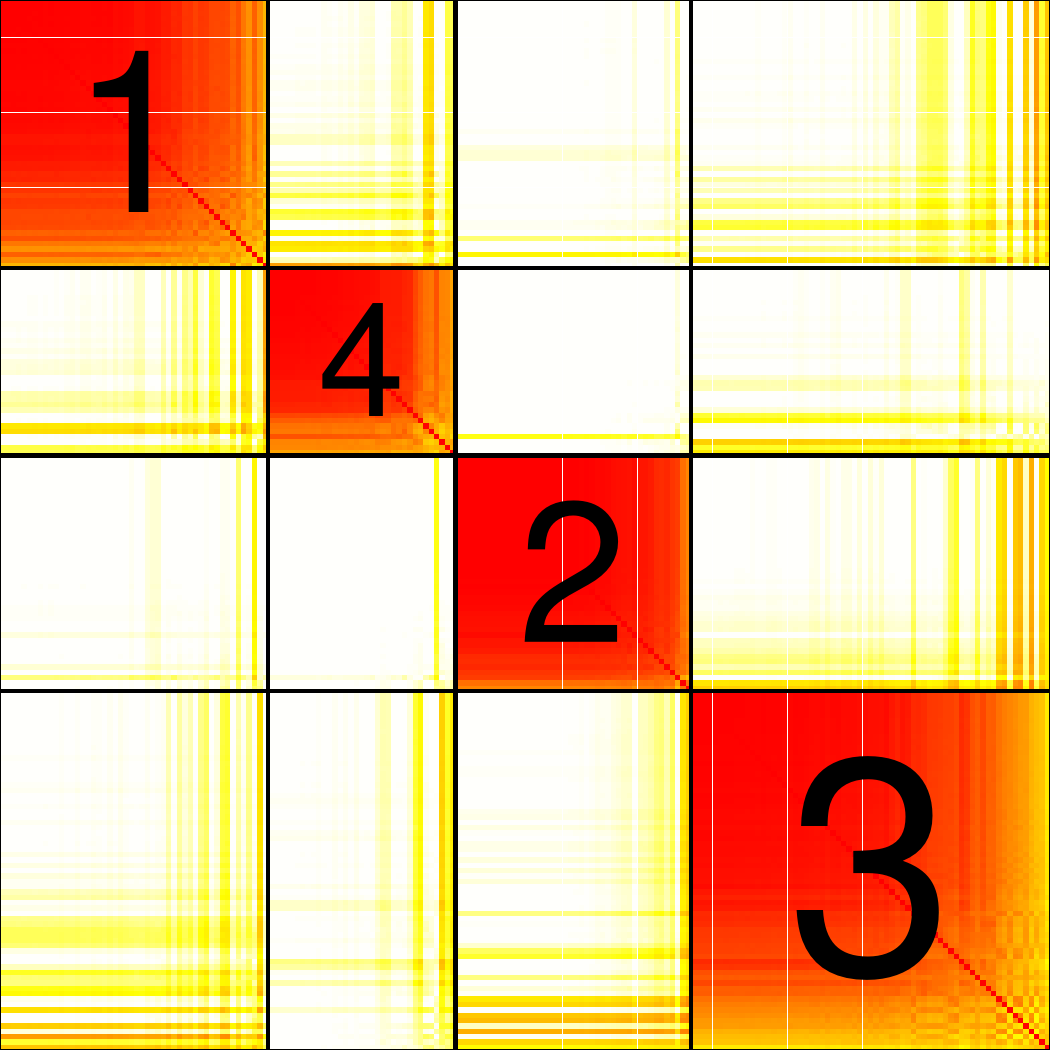}
  \end{minipage}
  \hfill
  \begin{minipage}{0.44\textwidth}
    \centering
    \includegraphics[width=\linewidth]{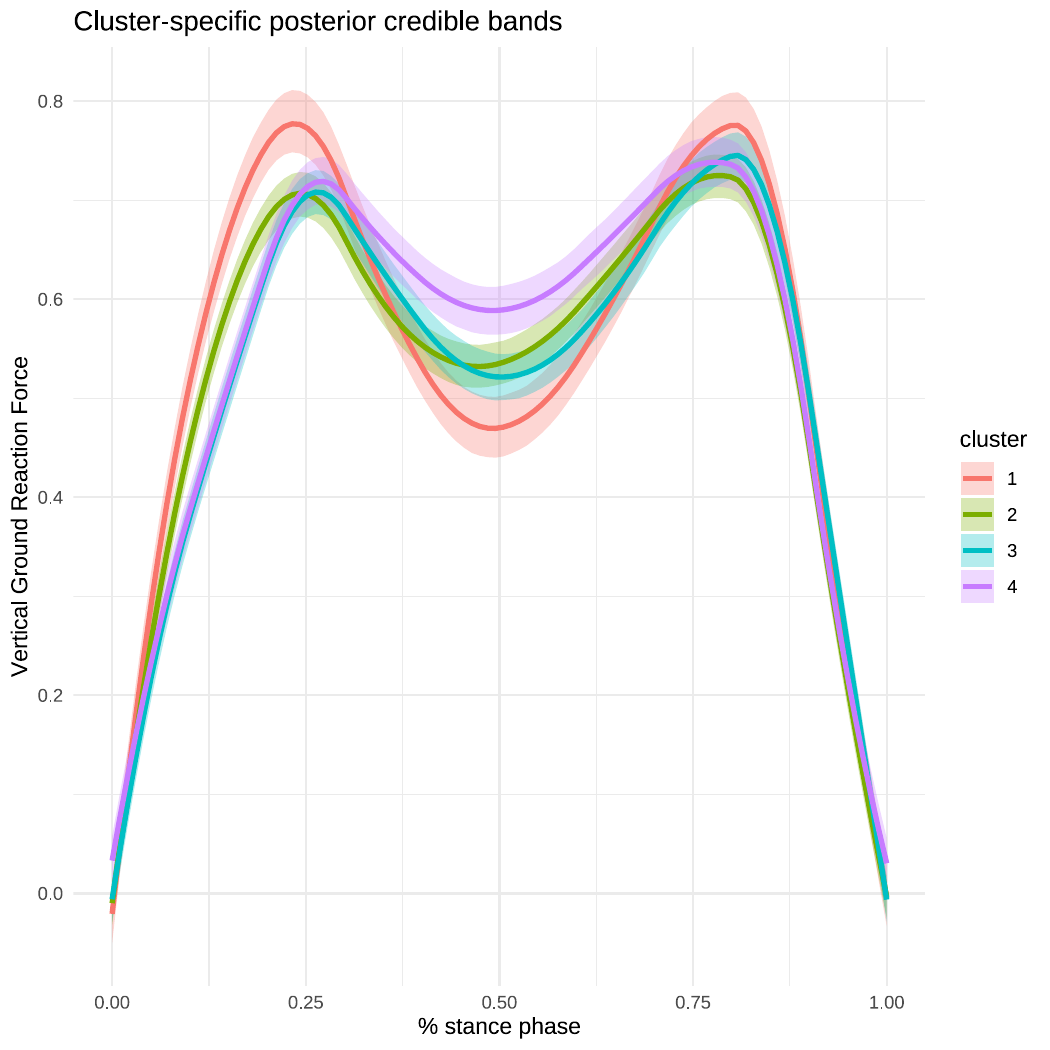}
  \end{minipage}
  \caption{The left plot displays the pairwise co-clustering probability matrix.  The right plot displays the cluster-specific estimated mean curves and 95\% credible bands.}
  \label{fig:biomechanic_cluster_specific_parm_colclustering}
\end{figure}

\subsection{Authorship of the Federalist Papers}
Our final example illustrates a situation in which the subpartition itself can be of interest beyond the uncertainty quantification that it provides.  The Federalist Papers are a collection of 85 essays that were very influential in early USA history with regards to ratifying its constitution.  While the authorship of 74 of the essays is well established, there are 11 whose authorship is still disputed.  Attempts to determine authorship of the remaining 11 have appeared in the literature (\citealt{mosteller:1963, jeong2025smalllargelanguagemodels}).  We attempt to shed further light on the subject using our CHIPS approach.  

The data consist of 85 essays and 70 functional words  that are normalized by the total number of words per essay\footnote{https://www.kaggle.com/datasets/tobyanderson/federalist-papers}.  Our goal is to cluster the 85 essays using the 70 dimensional vector of functional words with the idea that documents with similar word distributions were written by the same author.  Due to the $n \approx p$ setting,  we employ the factor model approach of \cite{chandra_etal:2023} which avoids the curse of dimensionality in clustering.  Specific details associated with the prior values employed and other important model details are provided in the supplementary material.  Here we simply mention that the posterior distribution associated with the partition is sensitive to these model decisions and so we emphasize the fact that the results that follow are dependent on the specified model.

After collecting 1000 MCMC samples from the posterior distribution of $\rho$ based on the model details provided in the supplementary material, we apply the CHIPS algorithm with $\gamma = 0.95$.  This resulted in AUChips = 0.90 which is fairly high, and so the posterior distribution of $\rho$ is fairly concentrated.  The CHIPS \& SALSO point estimate produced five clusters whose sizes are 75, 5, 3, 1, 1.  The subpartition included 55 of the 85 essays and  consisted of five clusters whose sizes are 45, 5, 3, 1, 1 all with posterior probability 1 except for the large cluster which has posterior probability of 0.96.  The smaller clusters in the subpartition were all pure in the sense that cluster 2 contained 5 documents known to be authored by John Jay, cluster 3 contained 3 documents authored jointly by Hamilton and Madison, and clusters 4 and 5 are singletons of Hamilton authored documents. No Madison authored essay belonged to the subpartition.  Interestingly the large cluster of 45 essays consisted of two disputed essays (55 and 62) and 43 known Hamilton essays. This runs counter to the analysis \cite{mosteller:1963} who conclude that  all the disputed essays are authored by Madison.  That said, they do mention that evidence in favor of essay 55 being authored by Madison is somewhat weak, and historians generally believe that essay 62 was authored by Hamilton, not Madison.   Of the Madison essays, 11 of them had the largest $q^{\max}$. In fact, adding any of the 11 essays to the subpartition would reduce the posterior probability of the subpartition from 0.9682 to 0.6868 which turns out to be the posterior probability of the SALSO point estimate.

\section{Conclusions}\label{sec:conclusion}

In Bayesian inferences, the vast majority of the literature places all of the effort in the posterior formulation, including the choice of prior and likelihood and how to conduct posterior computation. Critical questions of how best to report results of the analysis are often left unanswered, and instead practitioners use simple posterior summaries. Although this may be sufficient in many cases, for complex models and challenging parameter spaces, questions of how to best summarize the posterior are non-trivial and greatly impact the main conclusions of statistical analyses.

This paper establishes an interpretable framework for carrying out uncertainty quantification in Bayesian clustering. Our approach is based on constructing a sequence of subpartitions using the CHIPS algorithm.  The algorithm produces a credible set that accumulates a prespecified amount of posterior mass that is very interpretable in the sense that all partitions in the subset share a common subpartition.   The approach produces an overall measure of partition posterior uncertainty, the first of its kind, as far as we are aware. In addition, coherent local measures of uncertainty are provided for all units that do not belong to to the subpartition.    We demonstrated how the subpartition can be employed to carry out cluster-specific parameter inference without having to use relabeling techniques or conditioning on a partition point estimate that oftentimes has low posterior probability.  All of this is available in general software that does not rely on specifics of any given model.  Collectively, these contributions greatly enhance the interpretability of Bayesian clustering by establishing a foundation for principled uncertainty quantification for partition-based inference. Although we have focused on the case of summarizing posteriors on partitions, we hope that this work inspires more focus on improved posterior summaries in other challenging cases.

{\noindent {\bf Acknowledgments}} \\
This work was partially supported by grant R01ES035625 from the National Institute of Health (NIH), N000142412626 from the Office of Naval Research (ONR) and IIS-2426762 from the National Science Foundation (NSF).  

\singlespacing
\bibliographystyle{asa}
\bibliography{refs}

\appendix
\doublespacing


\newpage
\section*{Appendix}  

\section{CHIPS: Forward Algorithm}\label{sec:chips}

The algorithm below takes as input the MCMC draws $\rho^{(1)}, \ldots, \rho^{(M)}$, which are used to approximate the posterior probabilities required in the procedure. The output of the algorithm is two-fold. First, it returns the credible set $\mathcal{O}(r_0, \boldsymbol{\pi}_{1:n_0})$ that accumulates at least $\gamma$ of the posterior probability. Second, it produces the set
$\{(r^{(i_1)}, \boldsymbol{\pi}^{(i_1)})\}_{i_1=1}^n$, 
where each element characterizes a monotone sequence of subsets
$
\{r^{(i_1)}\} = \mathcal{O}\left(r^{(i_1)}, \boldsymbol{\pi}^{(i_1)}_{1:n}\right) \subset
\ldots \subset
\mathcal{O}\left(r^{(i_1)}, \boldsymbol{\pi}^{(i_1)}_{1:1}\right) = \mathcal{P}
$ constructed through a greedy strategy. These monotone sequences serve as the input for computing the AUChips measure described in the manuscript. 

Let $\mathcal{I} = (i_1^{(1)}, \ldots, i_1^{(M)})$ denote a list of starting points, where each element of the list takes values in $\{1,\ldots,n\}$. This allows the same starting point to appear multiple times in $\mathcal{I}$. For notational convenience, we will write $i_1 \in \mathcal{I}$ when referring to elements of this list.

\subsection*{Algorithm}
\begin{enumerate}

\item  For each starting point $i_1$ of the list $\mathcal{I}$, do the following:
\begin{enumerate}
    \item[1.1.] Define $\tilde{\boldsymbol{\pi}}_{1:1} = \{i_1\}$, $k_1 = 1$, and $s_{1} = \{\tilde{C}_{1,1}\}$, where $\tilde{C}_{1,1} = \{i_1\}$.
    \item[1.2.] For $\ell = 2, \ldots, n$ find the following
    \begin{itemize}
        \item[1.2.1.]  Given $\tilde{\boldsymbol{\pi}}_{1:(\ell-1)} = \{i_1, \ldots, i_{\ell-1}\}$  and $s_{\ell-1} = \{\tilde{C}_{1,\ell-1}, \ldots, \tilde{C}_{k_{\ell-1},\ell-1}\}$, and for each of the remaining units $i \in  \left\{1, \ldots, n\right\}  \setminus \tilde{\boldsymbol{\pi}}_{1:(\ell-1)}$, define the following partitions of $\tilde{\boldsymbol{\pi}}_{1:(\ell-1)} \cup \{i\}$ based on $s_{\ell-1}$,
    \begin{align*}
     \tilde{s}_{1,i} & = \{\tilde{C}_{1,\ell-1}\cup \{i\}, \ldots, \tilde{C}_{k_{\ell-1},{\ell-1}} \} \\
     \vdots & \\
     \tilde{s}_{k_{\ell-1},i} & = \{\tilde{C}_{1,\ell-1}, \ldots, \tilde{C}_{k_{\ell-1},{\ell-1}}\cup \{i\} \} \\
    \tilde{s}_{k_{\ell-1}+1,i} & = \{\tilde{C}_{1,\ell-1}, \ldots, \tilde{C}_{k_{\ell-1},\ell-1}, \{i\}\}.
    \end{align*}
    \item[1.2.2.] Determine which observation should be added to $\boldsymbol{\pi}_{1:(\ell-1)}$ and whether it is added to one of the existing clusters in $s_{\ell-1}$ or forms a new one, that is, compute
        $$
       (i_{\ell}, j) = \argmax_{(i,t) \in ( \left\{1, \ldots, n\right\} \setminus \tilde{\boldsymbol{\pi}}_{1:(\ell-1)}) \times \{1, \ldots, k_{\ell-1}+1\} }Pr\left(\rho_{\tilde{\boldsymbol{\pi}}_{1:(\ell-1)} \cup \{i\}} = \tilde{s}_{t,i} ~ | ~ \bm{y}\right)
       $$
       {\bf Remark}: The solutions to the $\argmax$ statement may not be unique. In such cases, we randomly select one of these solutions.
       \item[1.2.3.] Define $\tilde{\boldsymbol{\pi}}_{1:\ell} = \{i_1, \ldots, i_\ell\}$ and  $s_{\ell} = \{\tilde{C}_{1,\ell}, \ldots, \tilde{C}_{k_{\ell},\ell}\}$ where
       \begin{itemize}
           \item[--] if $j \leq k_{\ell-1}$, then $k_{\ell} = k_{\ell-1}$,  $\tilde{C}_{j,\ell} = \tilde{C}_{j,\ell-1} \cup \{i_\ell\}$, and for $l  = 1, \ldots, j-1, j+1, \ldots, k_{\ell}$, $\tilde{C}_{l,\ell} = \tilde{C}_{l,\ell-1}$.
           \item[--] if $j = k_{\ell-1}+1$, then $k_{\ell} = k_{\ell-1}+1$, for $l  = 1, \ldots, k_{\ell}-1$, $\tilde{C}_{l,\ell} = \tilde{C}_{l,\ell-1}$, and $\tilde{C}_{l,k_{\ell}} = \{i_\ell\}$.
       \end{itemize}
    \end{itemize}
    \item[1.3.] Define $r^{(i_1)} = s_n$ and $\boldsymbol{\pi}^{(i_1)} = (i_1,\ldots,i_n)$ and use them to define the monotone sequence of subsets
$
\{r^{(i_1)}\} = \mathcal{O}\left(r^{(i_1)}, \boldsymbol{\pi}^{(i_1)}_{1:n}\right) \subset
\ldots \subset
\mathcal{O}\left(r^{(i_1)}, \boldsymbol{\pi}^{(i_1)}_{1:1}\right) = \mathcal{P}.
$
\end{enumerate}

     \item Using the monotone sequences obtained in the previous step, find the credible set by first finding
\begin{align*}
    n_0 = \max\left\{\ell : \Pr(\rho \in \mathcal{O}(r^{(i_1)},\boldsymbol{\pi}_{1:\ell}^{(i_1)} \mid \bm{y}) \ge \gamma, \, \mbox{or each starting point $i_1$ of the list $\mathcal{I}$} \right\},
\end{align*}
and then given $n_0$ finding      
\begin{align*}
(r_0,\boldsymbol{\pi}) =
\underset{i_1 \in \mathcal{I}}{\rm argmax} 
\Pr(\rho \in \mathcal{O}(r^{(i_1)},\boldsymbol{\pi}_{1:n_0}^{(i_1)}) \mid \bm{y}).
\end{align*}
and return $\mathcal{O}(r_0, \boldsymbol{\pi}_{1:n_0})$ and $\{(r^{(i_1)}, \boldsymbol{\pi}^{(i_1)})\}_{i=1}^n$.
\end{enumerate}

{\bf Remark}: 
Because the random selection that may occur at Step 1.2.2 can produce different outputs even when starting from the same observation, it would be ideal to run the algorithm multiple times for each of the $n$ possible starting points. However, this can be computationally demanding. When this is not feasible, we instead create $\mathcal{I}$ by randomly drawing starting points from $\{1,\ldots,n\}$ with replacement.

\subsection{Stability of the CHIPS Algorithm}
As mentioned, including every observation multiple times as a possible starting point would be ideal, since repeated use of the same $i_1$ can lead to different outputs due to the random selection that may occur at Step 1.2.2. However, creating such a large list of starting points is often computationally infeasible. When this is the case, a practical alternative is to create a more manageable list $\mathcal{I}$ by sampling from $\{1,\ldots,n\}$ with replacement and then running the algorithm once using all $i_1 \in \mathcal{I}$ as starting points. This approach reduces computational burden while still allowing us to assess the uncertainty captured by the posterior distribution of $\rho$. The exact size of $\mathcal{I}$ depends on the diffuseness of the posterior distribution, with more diffuse posteriors requiring larger lists. We explore this issue in the small simulation that follows. The {\tt chips} function in the {\tt salso} {\tt R}-package includes an argument {\tt nRuns} that allows the user to control the size of the list $\mathcal{I}$ used by the algorithm.
\begin{figure}
\begin{center}
\includegraphics[scale=0.65]{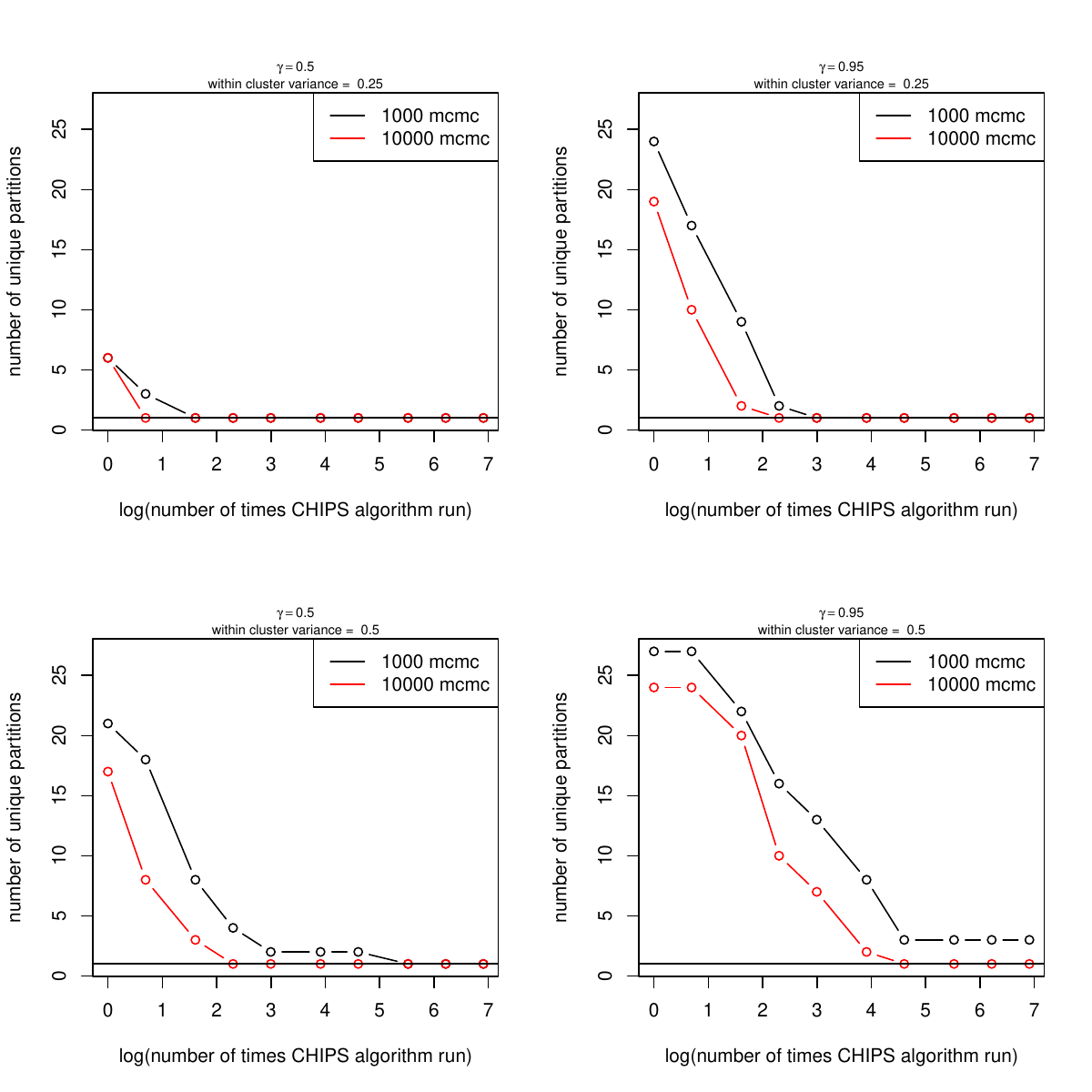}
\end{center}
\caption{Results of running CHIPS algorithm based on synthetic datasets in bottom two rows of Figure 1 found in the main document. The number of unique subpartitions out of 30 replicated experiments for different {\tt nRun} values are displayed }
\label{fig:stability1}
\end{figure}

Due to the discrete nature of MCMC approximation of the posterior probability of $\rho$ it is possible that multiple subpartitions meet the criteria laid out in (4) of the main document.  This is particularly true when running the algorithm with a relatively small number of MCMC  samples coupled with a diffuse posterior distribution. If, after running the algorithm for a given list $\mathcal{I}$, the CHIPS procedure identifies more than one subpartition with the same number of items and the same posterior probability, this suggests that the number of posterior draws may be insufficient. In most clustering applications the model involves continuous parameters and observations, and the posterior distribution of $\rho$ typically assigns different probabilities to distinct partitions. Thus, with a sufficiently large number of MCMC samples, we generally expect the algorithm to produce a unique solution. In this sense, obtaining multiple solutions provides a useful diagnostic of Monte Carlo error and highlights the need to increase the number of posterior draws.

In addition, the length of the list $\mathcal{I}$ also plays an important role in the stability of the algorithm. When the algorithm is run with several independently generated lists $\mathcal{I}$, we should expect the same result from the CHIPS algorithm. If different outputs arise across these runs, this indicates that the list of starting points is too small and should be enlarged. By appropriately adjusting both the number of posterior draws and the size of the list $\mathcal{I}$, the CHIPS algorithm can be made stable and reliable in practice. We explore the impact the number of posterior samples, number of multiple runs, and $\gamma$ has on the outcome of the CHIPS algorithm in a small simulation study that we now detail.

\begin{figure}
\begin{center}
\includegraphics[scale=0.65]{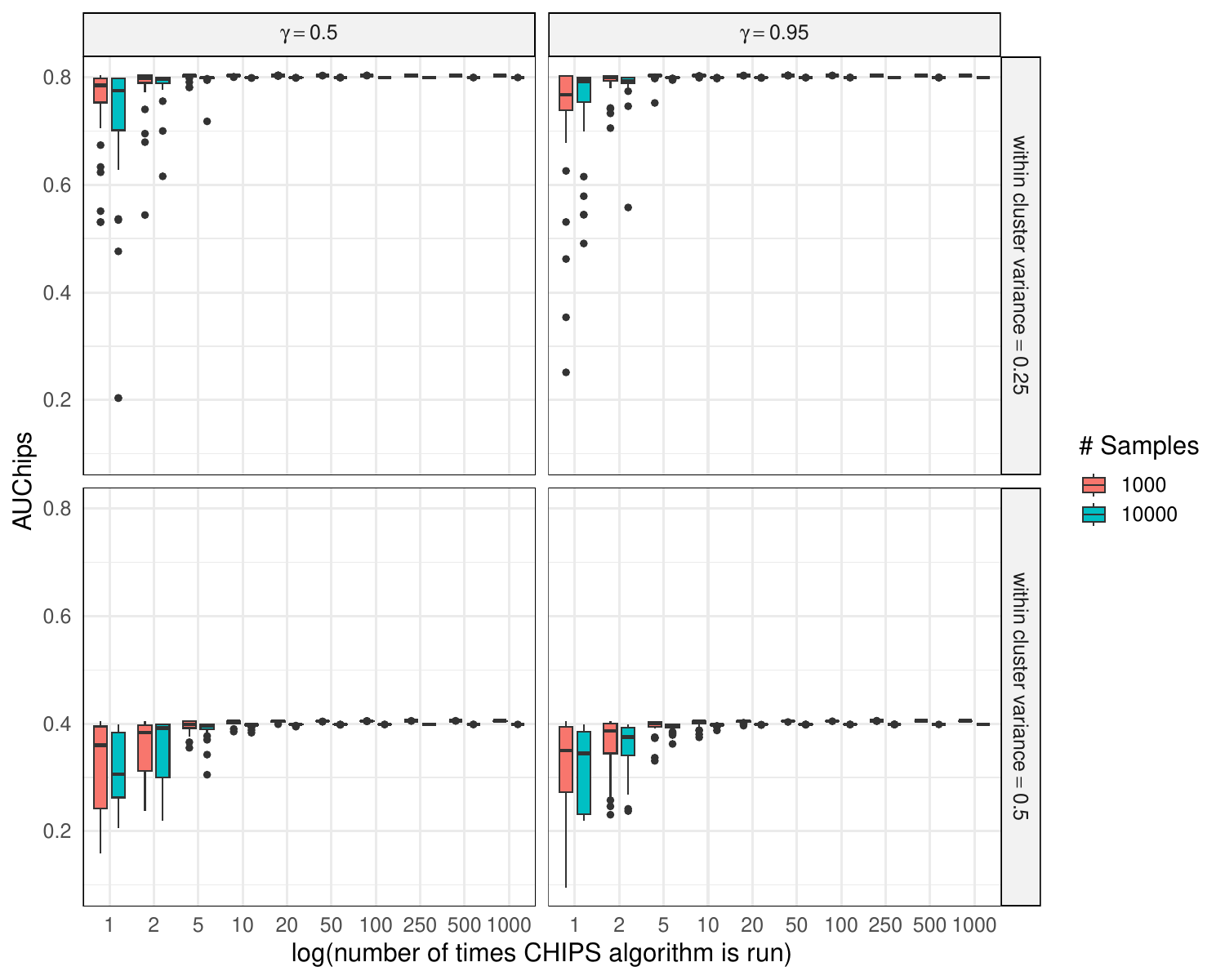}
\end{center}
\caption{Results of running CHIPS algorithm based on synthetic datasets in bottom two rows of Figure 1 found in the main document. The AUChips values from 30 replicated experiments for a number of {\tt nRun} values are displayed}
\label{fig:stability2}
\end{figure}

Using the synthetic datasets displayed in the bottom two rows of Figure 1 in the main document, we collected 1000 and 10,000 MCMC samples for $z$ using the ``z'' model (see Section 5 of the main document).  For both collections of MCMC samples we ran the CHIPS algorithm using {\tt nRuns} $\in \{1, 2, 5, 10, 20, 50, 100, 250, 1000\}$.  This was repeated 30 times so that we had 30 subpartition estimates for each of the {\tt nRuns} values.   We then enumerated the number of unique subpartitions across the 30 runs and also saved the AUChips value for each run. Results are provided in Figure \ref{fig:stability1} and \ref{fig:stability2}.  Notice that when there is relatively little cluster overlap a unique subpartition is identified even with a smaller number of MCMC samples and a smaller list $\mathcal{I}$. However, when there is significant cluster overlap then a list $\mathcal{I}$ with at least 150 items and at least 10,000 MCMC draws were required to obtain a unique partition for $\gamma =0.95$.  The AUChips value seems to stabilize in all scenarios using less MCMC samples and {\tt nRuns} values (see Figure \ref{fig:stability2}).  Overall it seems that the CHIPS algorithm is able to identify a unique subpartition so long as the number of MCMC samples and  the size of $\mathcal{I}$ are sufficient.  The exact number needed to ensure that the algorithm arrives at a unique subpartition is case-specific.

\section{Additional Results from Simulation Studies}

In Figure \ref{fig:simulation_compare_with_salso1} we provide results from the simulation study that considers variation of information loss for $n=400$.  Figures \ref{fig:simulation_compare_with_salso2} and \ref{fig:simulation_compare_with_salso3} display results under Binder loss for $n=100$ and $n=100$ respectively.  The results here follow similar patterns to those displayed in the main document.  The key differences are with $n=400$, there is essentially no difference between SALSO and CHIPS \& SALSO, and AUChips doesn't dip as low under moderate cluster separation.  Under Binder loss, the SALSO point estimate seems to be better under some scenarios than the CHIPS \& SALSO for both $n=100$ and $n=400$.

\begin{figure}
\begin{center}
\includegraphics[scale=0.65, page=5]{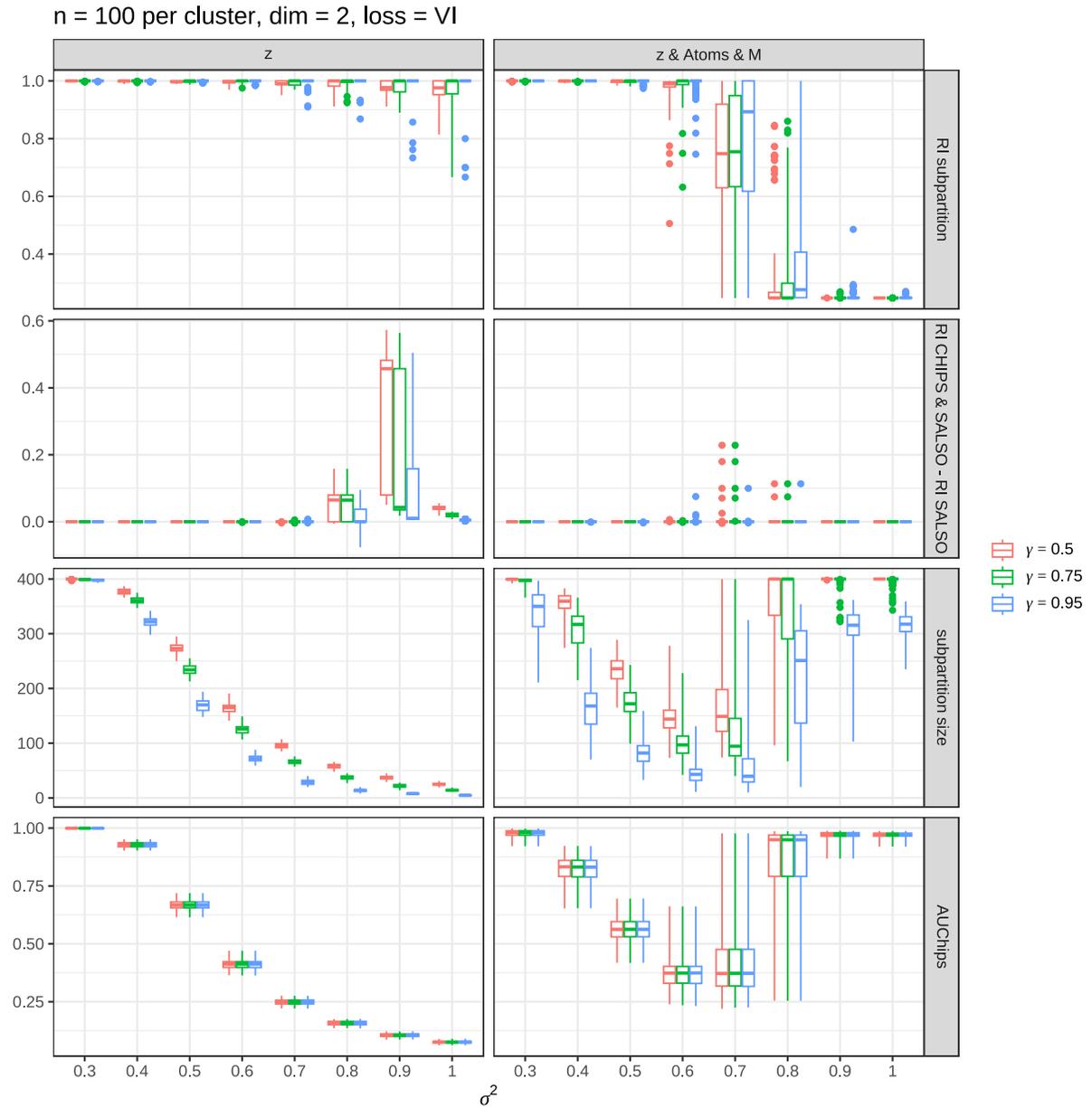}
\end{center}
\caption{Results from the simulation study when variation of information loss is used.  Data are generated from bivariate Gaussians and are composed of four clusters each with 100 observations. The top two rows display results for the Rand index, the third row displays the number of observations included in the subpartition, and the bottom row AUChips values.}
\label{fig:simulation_compare_with_salso1}
\end{figure}

\begin{figure}
\begin{center}
\includegraphics[scale=0.65, page=8]{plots/simulation_resultsRI_5.pdf}
\end{center}
\caption{Results from the simulation study when Binder loss is used.  Data are generated from bivariate Gaussians and are composed of four clusters each with 25 observations. The top two rows display results for the Rand index, the third row displays the number of observations included in the subpartition, and the bottom row AUChips.}
\label{fig:simulation_compare_with_salso2}
\end{figure}

\begin{figure}
\begin{center}
\includegraphics[scale=0.65, page=9]{plots/simulation_resultsRI_5.pdf}
\end{center}
\caption{Results from the simulation study when Binder loss is used.  Data are generated from bivariate Gaussians and are composed of four clusters each with 100 observations. The top two rows display results for the Rand index, the third row displays the number of observations included in the subpartition, and the bottom row AUChips.}
\label{fig:simulation_compare_with_salso3}
\end{figure}

Figures \ref{fig:simulation_compare_with_wade1} - \ref{fig:simulation_compare_with_wade3} display comparisons between the credible ball of \cite{wade&ghahramani:2018} and the interpretable credible set.  Associations found here are similar to those in the main document as well.  The average Rand index associated with partition in the ball of \cite{wade&ghahramani:2018} and the interpretable credible set are essentially the same, but when only considering the subpartition, it is slightly higher for the interpretable credible set.  The same relationships hold true under Binder loss as well.

\begin{figure}
\begin{center}
\includegraphics[scale=0.65, page=12]{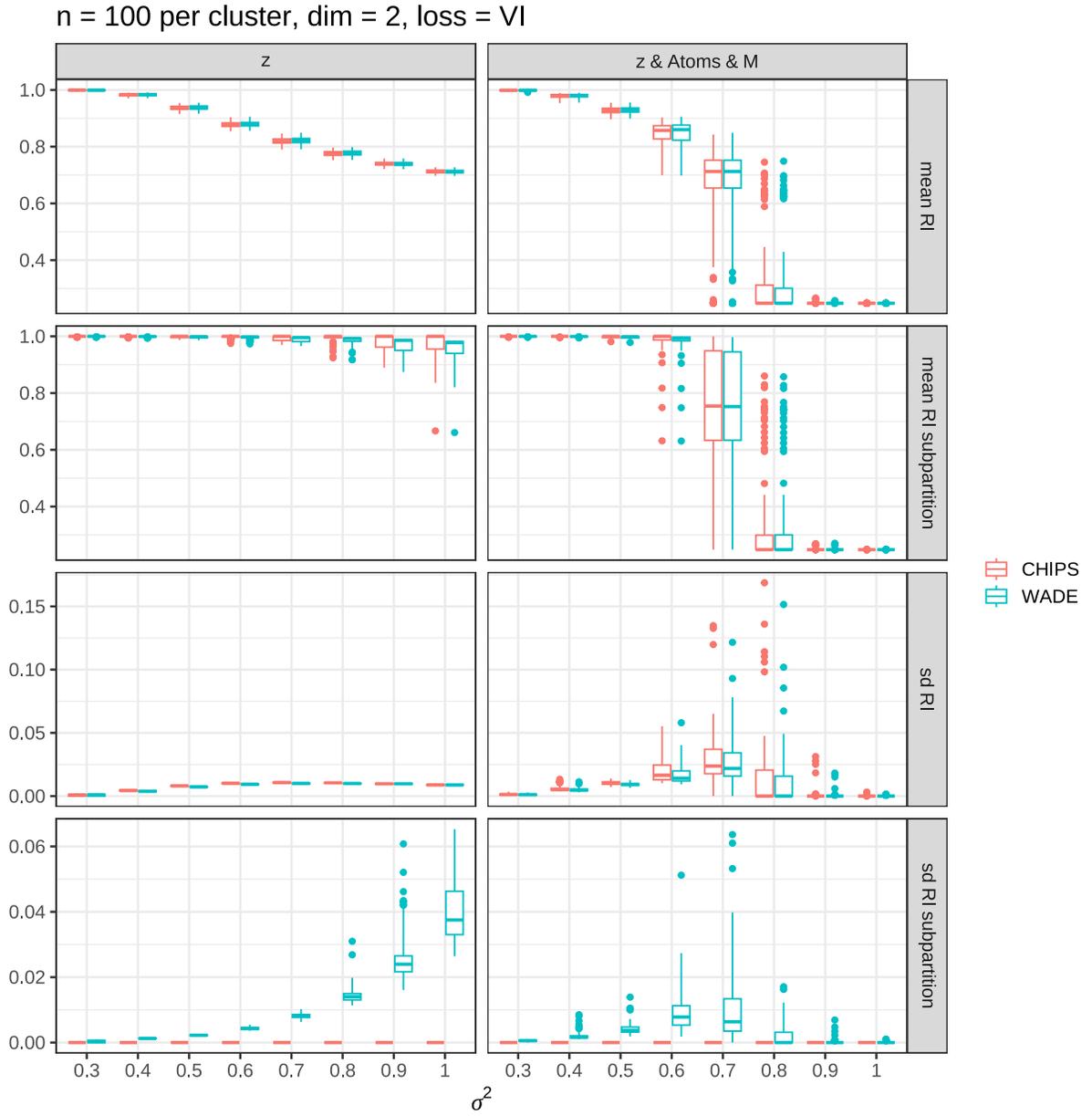}
\end{center}
\caption{Results from the simulation study comparing properties of the interpretable credible set to that of the credible ball of \cite{wade&ghahramani:2018}.  The threshold was set to $\gamma = 0.75$ when forming both credible sets. This is for $n=400$ and variation of information loss.}
\label{fig:simulation_compare_with_wade1}
\end{figure}

\begin{figure}
\begin{center}
\includegraphics[scale=0.65, page=15]{plots/simulation_resultsRI_5.pdf}
\end{center}
\caption{Results from the simulation study comparing properties of the interpretable credible set to that of the credible ball of \cite{wade&ghahramani:2018}.  The threshold was set to $\gamma = 0.75$ when forming both credible sets.  This is for $n=100$ and Binder loss.}
\label{fig:simulation_compare_with_wade2}
\end{figure}

\begin{figure}
\begin{center}
\includegraphics[scale=0.65, page=16]{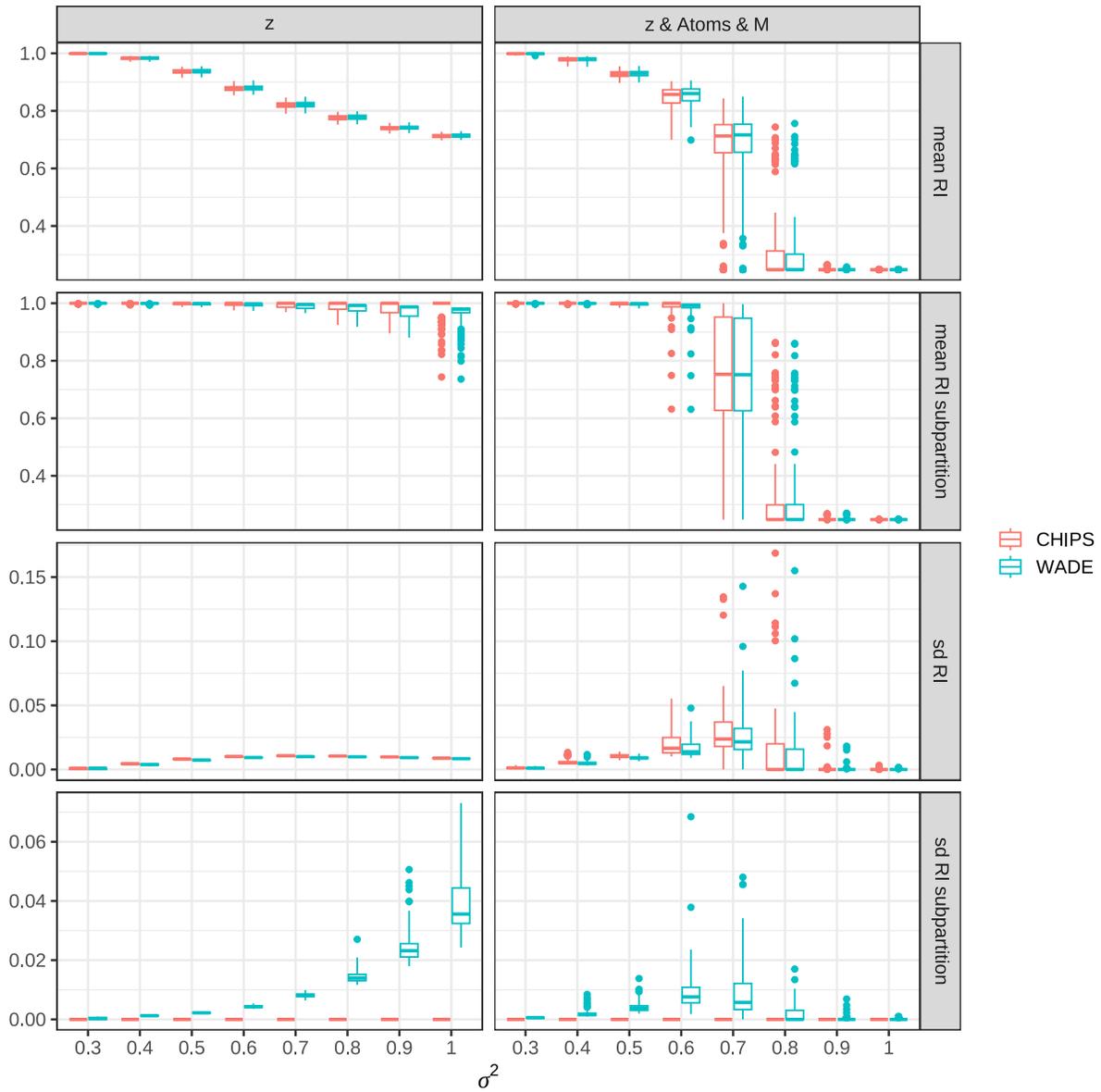}
\end{center}
\caption{Results from the simulation study comparing properties of the interpretable credible set to that of the credible ball of \cite{wade&ghahramani:2018}.  The threshold was set to $\gamma = 0.75$ when forming both credible sets.  This is for $n=400$ and Binder loss.}
\label{fig:simulation_compare_with_wade3}
\end{figure}

\section{Additional Details of Model Used in the Federalist Papers Illustration}
As mentioned in the main document, we employ the model outlined in \cite{chandra_etal:2023}.   and point the reader to the paper complete model details.  What follows are prior specifications that will be understood only after familiarizing yourself with the model.  We set the latent dimension to $d=10$, and center variables with respect to median and scale with respect to the median standard deviations.  We initialize cluster allocations using Kmeans with 15 clusters.  The hyper-prior values we used are the following $as = 1$, $bs = 0.3$,
$a=0.5$, $diag\_psi\_iw=20$, $niw\_kap=1e-3$, $nu=d+1000$,
$a\_dir=.1$, and $b\_dir=.1$.  See the method's accompanying {\tt R}-code for more details.

\end{document}